\newtheorem{theorem}{Theorem}
\newtheorem{definition}{Definition}
\newtheorem{proposition}{Proposition}
\newtheorem{corollary}{Corollary}
\newtheorem{example}{Example} %[section]
\DeclareMathOperator{\Dist}{dist}
\DeclareMathOperator{\Conv}{conv}
\DeclareMathAlphabet{\mathbit}{OML}{cmr}{bx}{it}
\DeclareMathAlphabet{\mathsf}{OT1}{cmss}{m}{n}
\DeclareMathAlphabet{\mathTXf}{OT1}{cmss}{bx}{it}
\DeclareMathOperator{\diag}{diag}
\DeclareMathOperator{\DoF}{DoF}
\DeclareMathOperator{\Rate}{R}
\DeclareMathOperator{\Size}{s}
\newcommand{\bB}{\mathbf{B}} 
\newcommand{\bD}{\mathbf{D}}
\newcommand{\bH}{\mathbf{H}}
\newcommand{\bS}{\mathbf{S}}
\newcommand{\LB}{\left(}
\newcommand{\RB}{\right)}
\newcommand{\LSB}{\left[}
\newcommand{\RSB}{\right]}
\newcommand*{\dotleq}{\mathrel{\dot{\leq}}}
\newcommand{\I}{\mathbf{I}} %identity matrix
\newcommand{\norm}[1]{\lVert{#1}\rVert}
\newcommand{\Fro}{{\mathrm{F}}}
\newcommand{\E}{{\mathrm{E}}}
\newcommand{\He}{{{\mathrm{H}}}}
\newcommand{\xv}{\mathbf{x}}
\theoremstyle{remark}
\newtheorem{remark}{Remark} %[section]
\theoremstyle{example}
\begin{document} 
\title{Spatial CSIT Allocation Policies for Network MIMO Channels}%
\author{Paul de Kerret and David Gesbert\\Mobile Communications Department, Eurecom\\
Campus SophiaTech, 450 Route des Chappes, 06410 Biot, France\\ {\normalsize \tt{\{{dekerret@eurecom.fr}{},{gesbert@eurecom.fr}{}\}}}} 
\maketitle 

\begin{abstract} %Furthermore, we formulate the problem of optimizing the CSIT allocation subject to a constraint on the total number of feedbacks bits in the network. 
In this work\footnote{This work has been performed under the Celtic-Plus project SHARING. Preliminary results have been published in \cite{dekerret2012_ICC}.}, we study the problem of the optimal dissemination of channel state information (CSI) among $K$ spatially distributed transmitters (TXs) jointly cooperating to serve $K$~receivers (RXs). One of the particularities of this work lies in the fact that the CSI is \emph{distributed} in the sense that each TX obtains its \emph{own} estimate of the global multi-user MIMO channel with no further exchange of information being allowed between the TXs. Although this is well suited to model the cooperation between non-colocated TXs, e.g., in cellular Coordinated Multipoint (CoMP) schemes, this type of setting has received little attention so far in the information theoretic society. We study in this work what are the CSI requirements at every TX, as a function of the network geometry, to ensure that the maximal number of degrees-of-freedom (DoF) is achieved, i.e., the same DoF as obtained under perfect CSI at all TXs. We advocate the use of the generalized DoF to take into account the geometry of the network in the analysis. Consistent with the intuition, the derived generalized DoF maximizing CSI allocation policy suggests that TX cooperation should be limited to a specific finite neighborhood around each TX. This is in sharp contrast with the conventional (uniform) CSI dissemination policy which induces CSI requirements that grow unbounded with the network size. The proposed CSI allocation policy suggests an alternative to clustering which overcomes fundamental limitations such as (i) edge interference and (ii) unbounded increase of the CSIT requirements with the cluster size. Finally, we show how finite neighborhood CSIT exchange translates into finite neighborhood message exchange so that finally global interference management is possible at finite SNR with only local cooperation. 
\end{abstract}

\IEEEpeerreviewmaketitle
%%%%%%%%%%%%%%%%%%%%%%%%%%%%%%%%%%%%%%%%%%%%%%%%%%%%%%%%%%%%%%%%%%%%%%%
%%%%%%%%%%%%%%%%%%%%%%%%%%%%%%%%%%%%%%%%%%%%%%%%%%%%%%%%%%%%%%%%%%%%%%%
%%%%%%%%%%%%%%%%%%%%%%%%%%%%%%%%%%%%%%%%%%%%%%%%%%%%%%%%%%%%%%%%%%%%%%%
%%%%%%%%%%%%%%%%%%%%%%%%%%%%%%%%%%%%%%%%%%%%%%%%%%%%%%%%%%%%%%%%%%%%%%%
\section{Introduction}
%%%%%%%%%%%%%%%%%%%%%%%%%%%%%%%%%%%%%%%%%%%%%%%%%%%%%%%%%%%%%%%%%%%%%%%
%%%%%%%%%%%%%%%%%%%%%%%%%%%%%%%%%%%%%%%%%%%%%%%%%%%%%%%%%%%%%%%%%%%%%%%
%%%%%%%%%%%%%%%%%%%%%%%%%%%%%%%%%%%%%%%%%%%%%%%%%%%%%%%%%%%%%%%%%%%%%%%
%%%%%%%%%%%%%%%%%%%%%%%%%%%%%%%%%%%%%%%%%%%%%%%%%%%%%%%%%%%%%%%%%%%%%%%

Network (or Multicell) MIMO methods, whereby multiple interfering transmitters (TXs) share user messages and allow for joint precoding, are currently considered for next generation wireless networks \cite{Gesbert2010}. With perfect message and channel state information (CSI) sharing, the different TXs can be seen as a unique virtual multiple-antenna array serving all receivers (RXs), in a multiple-antenna broadcast channel (BC) fashion. However, the sharing of the user's data symbols and the CSI to all cooperating TXs imposes huge requirements on the backhaul architecture, particularly as the number of cooperating TXs increases.

As a consequence, there has been a large literature dealing with the joint precoding across TXs based on limited backhaul (See \cite{Marsch2008,Simeone2009,Ng2008,Shamai2011,Park2013} and references therein). In \cite{Ng2008}, distributed schemes based on iterative updates of the transmit coefficients were designed to avoid the requirements of explicit CSI at the TXs. However, this approach cannot be applied in many delay-limited scenarios as the iterations introduce significant delay. In \cite{Simeone2009,Park2013}, considering centralized precoding, the impact of limited backhaul links between the central node and the TXs is discussed. In \cite{Marsch2008,Shamai2011}, the problem of the sharing of the users data symbols is studied with the assumption of perfect CSI at all TXs.

However, when considering the imperfect sharing of CSI in distributed precoding, it is always assumed that all the TXs designing jointly the precoder have the \emph{same} CSI. This means that practical schemes to reduce the sharing of the CSI with distributed precoding comes down to clustering with the CSI exchange being limited to small \emph{cooperation clusters} inside which the TXs cooperate. The optimal way of forming these clusters has recently become an active research topic\cite{Papadogiannis2008,Papadogiannis2010,Gong2011,Giovanidis2012,Bergel2012}. Still, clustering leads to some fundamental limitations. Firstly, there is inevitably inter-cluster interference on the boundaries of the cluster and secondly, it requires the obtaining at all the TXs inside the cluster of the CSI relative to the entire cluster which means that the amount of CSI feedback required quickly increases with the number of TXs inside the cluster. Several works have focused on determining the optimal size of the clusters when taking into account the cost of estimating the channel elements, e.g., \cite{Huh2012,Lozano2013}. They suggest that TX cooperation cannot efficiently manage interference, even if the backhaul links are strong enough to form large clusters. The main message behind \cite{Lozano2013} is that pilot-based channel estimates incurs a substantial loss when trying to learn the channel from a large number of users within a finite coherence time interval, causing the DoF to saturate. We do not focus in this work on the estimation of the channel but only on the problem of uplink feedback and of CSI sharing between the TXs so that our results do not directly challenge the conclusions of \cite{Lozano2013}  but are in fact complementary. In particular a new perspective arises from the accounting of path loss modeling (and network geometry) in the feedback requirement analysis.

One other important element is that we do not restrict to clustering which has the aforementioned limitations but instead we allow each TX to obtain the CSI relative to any other TX or RX, with the only constraint being on the total amount of information exchanged. Note that the optimization of the feedback allocation in the case where the user's data symbols are \emph{not} shared between the TXs (i.e., coordinated beamforming) yields a completely different problem setting \cite{Zhang2010,Ozbek2010,Saleh2010,Ho2011,Bhagavatula2011a,Tajer2011b,Khoshnevis2012}.

The question that we state is whether it is possible to overcome the fundamental limitations of clustering by optimizing directly the spatial allocation of CSIT. Hence, we study the minimization of the CSI shared across the TXs subject to a given required performance. To tackle this intricate question, we consider the high SNR regime and we study the number of degrees-of-freedom (DoF) achieved. We consider also that the pathloss of the interfering links is parameterized as some function of the SNR. This parameterization allows to model the network geometry and leads to analyze the \emph{generalized DoF} as in\cite{Etkin2008,Tuninetti2007,Wu2007,Jafar2010,Mohapatra2012,Vaze2012b,Chaaban2012}. This modeling of the pathloss as a function of the SNR is essential to model the effect of the network geometry in a DoF analysis where the SNR is assumed to become infinitely large. Indeed, omitting to use such a parameterization and letting the SNR become large makes the pathloss differences (i.e., the network geometry) negligible.

In previous works \cite{dekerret2011_ISIT_journal,dekerret2013_SPAWCspecial}, the DoF and the rate offset have been considered in a distributed CSI setting, where all the wireless links between a TX and a RX have the same pathloss. The focus of \cite{dekerret2011_ISIT_journal} is on the derivation of robust precoders and the approach is completely different due to the restrictive geometry with homogeneous pathloss only.	

In this work, we provide a CSIT dissemination policy, denoted as \emph{distance-based} allowing to achieve the same generalized DoF as that of a cooperative network with perfect CSI at every TX. This CSIT dissemination policy requires only the sharing of the user's data symbols and of the CSI to within a neighborhood which does not increase with the size of the network. Hence, we show that the pathloss attenuation effectively limits the impact of interference to a local neighborhood around each TX and allows for global interference management with only local cooperation.

%The analysis is carried out in a $2$-dimensional network where the \emph{interference level} of a wireless channel, defined as in \cite{Etkin2008}, increases regularly as the distance between the nodes increases. 

\emph{Notations:} We denote by~$\bm{e}_i$ the $i$-th column of the $K\times K$ identity matrix, by~$\bullet^{\He}$ the Hermitian transpose, and by $\delta_{ij}$ the Kronecker symbol which is equal to $1$ if $j=i$ and to zero otherwise. The operator $[\bullet]^{+}$ takes the maximum between the real argument and $0$, and $\lceil\bullet \rceil$ denotes the ceiling operator. $|\mathcal{A}|$ is used to denote the cardinality of the finite set~$\mathcal{A}$. The complex circularly symmetric Gaussian distribution with zero mean and variance~$\sigma^2$ is represented by $\mathcal{N}(0,\sigma^2)$. The $(ij)$-th elements of a matrix~$\mathbf{A}$ is denoted equivalently as $\{\mathbf{A}\}_{ij}$ or as~$A_{ij}$. Let $f$ and $g$ be two functions taking their value in $\mathbb{R}$ with the function $g$ taking only non-zero values. We write $f(x)=o(g(x))$ to denote that $\lim_{x\rightarrow \infty } \frac{|f(x)|}{|g(x)|}=0$. We also use the exponential equality~$f(x)\doteq x^b$ as in \cite{Zheng2003} to denote that $\lim_{x\rightarrow \infty} \frac{\log(f(x))}{\log(x)}=b$.

% 

%%%%%%%%%%%%%%%%%%%%%%%%%%%%%%%%%%%%%%%%%%%%%%%%%%%%%%%%%%%%%%%%%%%%%%%
%%%%%%%%%%%%%%%%%%%%%%%%%%%%%%%%%%%%%%%%%%%%%%%%%%%%%%%%%%%%%%%%%%%%%%%
%%%%%%%%%%%%%%%%%%%%%%%%%%%%%%%%%%%%%%%%%%%%%%%%%%%%%%%%%%%%%%%%%%%%%%%
%%%%%%%%%%%%%%%%%%%%%%%%%%%%%%%%%%%%%%%%%%%%%%%%%%%%%%%%%%%%%%%%%%%%%%%
\section{System Model}\label{se:SM}
%%%%%%%%%%%%%%%%%%%%%%%%%%%%%%%%%%%%%%%%%%%%%%%%%%%%%%%%%%%%%%%%%%%%%%%
%%%%%%%%%%%%%%%%%%%%%%%%%%%%%%%%%%%%%%%%%%%%%%%%%%%%%%%%%%%%%%%%%%%%%%%
%%%%%%%%%%%%%%%%%%%%%%%%%%%%%%%%%%%%%%%%%%%%%%%%%%%%%%%%%%%%%%%%%%%%%%%
%%%%%%%%%%%%%%%%%%%%%%%%%%%%%%%%%%%%%%%%%%%%%%%%%%%%%%%%%%%%%%%%%%%%%%%
% Besides the knowledge of data symbols, each TX is supposed to acquire, through an unspecified feedback or sharing mechanism, its \emph{own} estimate on the channel vectors to all the RXs. 
%%%%%%%%%%%%%%%%%%%%%%%%%%%%%%%%%%%%%%%%%%%%%%%%%%%%%%%%%%%%%%%%%%%%%%%
%%%%%%%%%%%%%%%%%%%%%%%%%%%%%%%%%%%%%%%%%%%%%%%%%%%%%%%%%%%%%%%%%%%%%%%
\subsection{Network MIMO}
%%%%%%%%%%%%%%%%%%%%%%%%%%%%%%%%%%%%%%%%%%%%%%%%%%%%%%%%%%%%%%%%%%%%%%%
%%%%%%%%%%%%%%%%%%%%%%%%%%%%%%%%%%%%%%%%%%%%%%%%%%%%%%%%%%%%%%%%%%%%%%%
We consider a network MIMO setting in which $K$~non-colocated transmitters (TXs) transmit \emph{jointly} via linear precoding to $K$~receivers (RXs) equipped with a single antenna and applying single user decoding. %The extension to multiple-antennas TXs will be discussed later in this work. If the RXs are equipped with multiple-antennas but apply a matched-filter, the analysis directly holds for the virtual MISO channel obtained. Taking into account the zero forcing (ZF) capabilities of the RX is however out of the scope of this work.
Each TX initially has the knowledge of the~$K$ data symbols to transmit to the $K$~RXs (owing to TX cooperation friendly routing protocol for user-plane data). Note that this assumption will be challenged in Section~\ref{se:ExpDec}. The transmission is then described as
\begin{equation}
\begin{bmatrix}
y_1\\\vdots\\y_K
\end{bmatrix} 
=
\begin{bmatrix}
\bm{h}^{\He}_1\\\vdots\\
\bm{h}^{\He}_K
\end{bmatrix}\cdot
\begin{bmatrix}
x_1\\\vdots\\x_K
\end{bmatrix} 
+
\begin{bmatrix}
\eta_1\\\vdots\\
\eta_K
\end{bmatrix} 
\label{eq:SM_1}
\end{equation}% We further define the normalized channel vector $\tilde{\bm{h}}_i\triangleq \bm{h}_i/\norm{\bm{h}_i}$. 
where $\bm{y}\triangleq [y_1,\ldots, y_K]^{\He}\in \mathbb{C}^{K\times 1}$ contains the received signals at the $K$~RXs, $\bm{h}^{\He}_i \in \mathbb{C}^{1\times K}$ is the channel to the $i$-th RX, $\bm{\eta}\triangleq [\eta_1,\ldots,\eta_K]^{\He}\in \mathbb{C}^{K\times 1}$ is the i.i.d.~$\mathcal{N}(0,1)$ normalized noise at the RXs, and $\xv\triangleq [x_1,\ldots,x_K]^{\He}\in \mathbb{C}^{K\times 1}$ represents the transmit signals at the $K$~TXs. 

We also define the multi-user channel~$\mathbf{H}\triangleq [\bm{h}_1,\ldots, \bm{h}_K]^{\He}$. $H_{k,i}$ designates the fading coefficient between TX~$i$ and RX~$k$. We consider a Rayleigh fast fading channel such that~$H_{k,i}=\sigma_{k,i}\tilde{H}_{k,i}$ where $\tilde{H}_{k,i}\sim\mathcal{N}(0,1)$ is a Gaussian random variable and the value of~$\sigma_{k,i}$ will reflect the geometry (topology) of the network. We consider in the following for the sake of clarity that~$\forall i,\sigma_{i,i}^2=1$.

The transmit signal~$\xv$ is obtained from the user's data symbols~$\bm{s}\triangleq[s_1,\ldots,s_K]^{\He}\in\mathbb{C}^{K\times 1}$ (i.i.d. $\mathcal{N}(0,1)$) as
\begin{equation}
\xv=\begin{bmatrix}\bm{t}_1&\ldots& \bm{t}_K\end{bmatrix}\cdot
\begin{bmatrix}s_1\\\vdots\\s_K\end{bmatrix} .
\label{eq:SM_2}
\end{equation}
Hence, the vector~$\bm{t}_i \in \mathbb{C}^{K\times 1}$ represents the beamforming vector used to transmit $s_i$ to RX~$i$ and we define as $\mathbf{T}\triangleq [\bm{t}_1,\ldots, \bm{t}_K]\in \mathbb{C}^{K\times K}$ the multi-user joint precoder. We consider a sum power constraint and an equal power allocation to the users, both for clarity and because it does not impact the DoF. Note that because of the normalization of the noise and of the direct channels, $P$ denotes also the average per-stream SNR. The ergodic rate of user~$i$ is written as
\begin{equation}
R_i\triangleq\E\left[\log_2\left(1+\frac{|\bm{h}_i^{\He}\bm{t}_i|^2}{1+\sum_{\ell\neq i}|\bm{h}_{i}^{\He}\bm{t}_\ell|^2}\right)\right].
\label{eq:SM_3}
\end{equation}
The \emph{DoF} at RX~$i$ is defined as commonly used in the literature as \cite{Tse2005}
\begin{equation}
\DoF_i\triangleq \lim_{P\rightarrow\infty}\frac{R_i}{\log_2(P)}.
\label{eq:SM_4}
\end{equation}
Note that the channel elements are all non-zero with probability one. Consider only these channel realizations does not reduce the DoF achieved but has for consequence that all the expectations can be shown to exist. For the sake of clarity, we consider then only these channel realizations in the following. A more detailed explanation can be found in Appendix~\ref{app:preliminary}.

%%%%%%%%%%%%%%%%%%%%%%%%%%%%%%%%%%%%%%%%%%%%%%%%%%%%%%%%%%%%%%%%%%%%%%%
%%%%%%%%%%%%%%%%%%%%%%%%%%%%%%%%%%%%%%%%%%%%%%%%%%%%%%%%%%%%%%%%%%%%%%%
\subsection{Generalized DoF Analysis}
%%%%%%%%%%%%%%%%%%%%%%%%%%%%%%%%%%%%%%%%%%%%%%%%%%%%%%%%%%%%%%%%%%%%%%%
%%%%%%%%%%%%%%%%%%%%%%%%%%%%%%%%%%%%%%%%%%%%%%%%%%%%%%%%%%%%%%%%%%%%%%%
However, we aim in this work at studying the effect of the network topology in the high SNR performance. In the conventional DoF analysis, any finite pathloss difference is neglected which can lead in some cases to a significant gap between the predicted and the true performance at finite SNR. As a consequence, we use the notion of \emph{generalized DoF}\cite{Etkin2008,Tuninetti2007,Wu2007,Jafar2010,Mohapatra2012,Vaze2012b,Chaaban2012} as a way to obtain a more accurate modeling of the performance. In the generalized DoF approach, the attenuation of the interference is represented as an exponential function of the transmit power so as to preserve the impact of the network geometry in the high SNR analysis. Hence, the generalized DoF at RX~$i$ is defined as
\begin{equation}
\begin{aligned}
\DoF_i(\{\mathbf{B}^{(j)}\}_{j=1}^K,\bm{\Gamma})&\triangleq \lim_{P\rightarrow\infty}\frac{R_i}{\log_2(P)}\\
&\quad\text{subject to $\sigma_{k,i}^2=P^{-\{\bm{\Gamma}\}_{k,i}},\forall k,i$}
\end{aligned}
\label{eq:SM_5}
\end{equation}
where the CSIT allocation~$\{\mathbf{B}^{(j)}\}_{j=1}^K$ and the precoding used will be described in the following and the matrix~$\bm{\Gamma}\in [0,\infty]^{K\times K}$ is called the \emph{interference level matrix} and is given as a function of the parameters of the practical network being studied. Its $(k,i)$-th element is then denoted by~$\Gamma_{k,i}$. More specifically, our goal is to model the transmission in a network with finite pathloss and finite transmit SNR in the most accurate possible way, and the interference level matrix makes the link between the practical network and the model obtained. Denoting by $P_0$ the finite power used in practice and by $\sigma_{k,i,0}$ the variance of the channel coefficient between TX~$i$ and RX~$k$ in the practical setting, the interference level matrix is then defined as
\begin{equation}
\Gamma_{k,i}\triangleq-\frac{\log(\sigma_{k,i,0}^2)}{\log(P_0)},\qquad \forall k,i.
\label{eq:SM_6}
\end{equation}
Note that we assume that all the diagonal coefficients of the interference-level matrix are equal to zero, $\Gamma_{i,i}=0,\forall i$.

As already discussed in the introduction, the goal of the generalized DoF is not to model an unrealistic channel where the pathloss increases with the SNR, in the same way that a DoF analysis does not really apply for transmission with infinite amount of power. It consists simply, starting from a practical setting, in letting both the SNR and the pathloss increase at the same time, instead of letting simply the SNR increase, as in a conventional DoF analysis. This ensures that the differences of power between the wireless links do not become negligible when considering the high SNR regime and hence allows us to take into account the geometry of the network in our analysis. 

\begin{remark}
When $P_0$ -- the transmit power used in the practical setting-- tends to infinity, the prelog factor converges to the (conventional) DoF. The prelog factor converges to the generalized DoF in a different limiting regime where both the pathloss and the SNR increase at the same time. This could for example be the case if the transmit power is made dependent of the distance between the TXs and the RXs. Both approaches however can be used to approximate the performance in practical settings at finite (high) SNR. When there are significant pathloss differences, the generalized DoF will be more accurate.
\qed
\end{remark}

\begin{example} 
Let us consider as toy example a Gaussian IC with two TX/RX pairs and every node having a single-antenna. They interfere to each other via a channel of variance $\sigma^2\in (0,1)$ while the direct links have unit variance. The DoF is well known to be equal to $0.5$ independently of the value of $\sigma^2$\cite{Tse2005}. However, if the TXs interfere with very low power, e.g., $\sigma^2=10^{-12}$, then the interference will be negligible for any realistic range of power used for the transmission. Considering a transmission at SNR~$P_0=30$~dB, the interfering coefficient as defined in \cite{Etkin2008} would be $\alpha=\max(\log(P_0\sigma^2)/\log(P_0),0)=0$ and the generalized DoF would then be equal to $1-\alpha=1$. Hence, the generalized DoF analysis models more accurately the transmission in that setting. \qed
\end{example}

%%%%%%%%%%%%%%%%%%%%%%%%%%%%%%%%%%%%%%%%%%%%%%%%%%%%%%%%%%%%%%%%%%%%%%%
%%%%%%%%%%%%%%%%%%%%%%%%%%%%%%%%%%%%%%%%%%%%%%%%%%%%%%%%%%%%%%%%%%%%%%%
\subsection{Distributed CSI at the TXs}\label{se:SM:CSI}
%%%%%%%%%%%%%%%%%%%%%%%%%%%%%%%%%%%%%%%%%%%%%%%%%%%%%%%%%%%%%%%%%%%%%%%
%%%%%%%%%%%%%%%%%%%%%%%%%%%%%%%%%%%%%%%%%%%%%%%%%%%%%%%%%%%%%%%%%%%%%%% 

The joint precoder is implemented distributively at the TXs with each TX relying solely on its own estimate of the channel matrix in order to compute its transmit coefficient, without any exchange of information with the other TXs\cite{Zakhour2010a,dekerret2011_ISIT_journal}. To model the imperfect CSI at the TX (CSIT), the channel estimate at each TX is assumed to be obtained from a limited rate digital feedback scheme. Consequently, we introduce the following definitions.
\begin{definition}[Distributed Finite-Rate CSIT]
We represent a \emph{CSIT allocation} by the collection of matrices~$\{\mathbf{B}^{(j)}\}_{j=1}^K$ where $\mathbf{B}^{(j)}\in \mathbb{R}_+^{K\times K}$ denotes the CSIT allocation at TX~$j$. Hence, TX~$j$ receives the multi-user channel estimate~$\bH^{(j)}$ defined from
\begin{equation}
H_{k,i}^{(j)}=\sigma_{k,i}\tilde{H}_{k,i}+\sigma_{k,i}\sqrt{2^{-B_{k,i}^{(j)}}}\Delta\tilde{H}^{(j)}_{k,i},\qquad \forall i,k
\label{eq:SM_7}
\end{equation} 
where~$\Delta\tilde{H}^{(j)}_{k,i}\sim\mathcal{N}(0,1)$ and the~$\Delta\tilde{H}^{(j)}_{k,i}$ are mutually independent and independent of the channel. 
\end{definition} 
%The CSIT errors~$\Delta\tilde{H}^{(j)}_{ki}$ are further collected in the matrix~$\bm{\Delta}\tilde{\bH}^{(j)}$ such that
%\begin{equation}
%\tilde{\bH}=\tilde{\bH}^{(j)}+2^{-B_{ki}^{(j)}}\bm{\Delta}\tilde{\bH}^{(j)}.
%\label{eq:SM_9}
%\end{equation} 
\begin{remark}
The reasons for modeling the imperfect CSIT via \eqref{eq:SM_7} are as follows. First, it is well known from rate-distortion theory that the minimal distortion when quantizing a standard Gaussian source using $B$~bits is equal to~$2^{-B}$ \cite[Theorem~$13.3.3$]{Cover2006} while this distortion value is also achieved up to a multiplicative constant using for example the Lloyd algorithm \cite{Girod2013} or even scalar quantization. Thus, the decay in $2^{-B}$ as the number of quantization bits increases, represents a reasonable model. 

Furthermore, only the asymptotic behavior exponentially in the SNR is of interest in this work such that the distribution of the CSIT error does not matter here. We have chosen a Gaussian model for simplicity but other distributions fulfilling some mild regularity constraints could be chosen.\qed
\end{remark}

It is a well known result that the number of CSI feedback bits should scale with the SNR in order to achieve a positive DoF in MISO BCs~\cite{Jindal2006,Caire2010,dekerret2011_ISIT_journal}. Hence, the prelog factor represents an appropriate measure at high SNR of the amount of CSIT required. Thus, we define the \emph{size} of a CSIT allocation as follows.
\begin{definition}[Size of a CSIT allocation]
The size $\Size(\bullet)$ of a CSIT allocation~$\mathbf{B}^{(j)}$ at TX~$j$ is defined as
\begin{equation}
\Size(\mathbf{B}^{(j)})\triangleq \lim_{P\rightarrow \infty} \frac{\sum_{i,k}B_{k,i}^{(j)}}{\log_2(P)}
\label{eq:SM_8}
\end{equation}
such that the total size of a CSIT allocation~$\{\mathbf{B}^{(j)}\}_{j=1}^K$ is
\begin{align}
\Size(\{\mathbf{B}^{(j)}\}_{j=1}^K)&\triangleq \sum_{j=1}^K\Size(\mathbf{B}^{(j)})\\
&=\lim_{P\rightarrow \infty} \frac{\sum_{i,j,k}B_{k,i}^{(j)}}{\log_2(P)}.
\label{eq:SM_9}
\end{align}
\label{def_size}
\end{definition}
\begin{remark} We consider here a digital quantization of the channel vectors but the results can be easily translated to a setting where analog feedback is used. Indeed, digital quantization is simply used as a way to quantify the variance of the CSIT errors \cite{Samardzija2006,Caire2010}. Furthermore, only CSI requirements at the TXs are investigated, and different scenarios can be envisaged for the sharing of the channel estimates (e.g., direct broadcasting from the RXs to all the TXs, sharing through a backhaul, \dots)\cite{dekerret2013_WCM}. 
 \qed
\end{remark}
%%%%%%%%%%%%%%%%%%%%%%%%%%%%%%%%%%%%%%%%%%%%%%%%%%%%%%%%%%%%%%%%%%%%%%%
%%%%%%%%%%%%%%%%%%%%%%%%%%%%%%%%%%%%%%%%%%%%%%%%%%%%%%%%%%%%%%%%%%%%%%%
\subsection{Distributed precoding}\label{se:SM:Precoding}
%%%%%%%%%%%%%%%%%%%%%%%%%%%%%%%%%%%%%%%%%%%%%%%%%%%%%%%%%%%%%%%%%%%%%%%
%%%%%%%%%%%%%%%%%%%%%%%%%%%%%%%%%%%%%%%%%%%%%%%%%%%%%%%%%%%%%%%%%%%%%%%

Based on its individual CSIT, each TX designs its transmit coefficients. We focus here on the CSI dissemination problem under a conventional precoding framework. Hence, we assume that the sub-optimal zero forcing (ZF) precoder is used. Based on its own channel estimate~$\mathbf{H}^{(j)}$, TX~$j$ computes then the ZF beamforming vector~$\bm{t}_i^{(j)}$ to transmit symbol $s_i$ such that 
\begin{equation}
\bm{t}_i^{(j)}\triangleq \sqrt{P}\frac{\left(\mathbf{H}^{(j)}\right)^{-1}\bm{e}_i}{\norm{\left(\mathbf{H}^{(j)}\right)^{-1}\bm{e}_i}},\qquad \forall i\in\{1,\ldots,K\}.
\label{eq:SM_10}
\end{equation} 
\begin{remark}
ZF represents a priori a sub-optimal precoding scheme. It is however well known to achieve the maximal DoF in the MIMO BC with perfect CSIT \cite{Jindal2006,Caire2010}. Furthermore, considering limited feedback in the compound MIMO BC, it is revealed in \cite{Caire2007} that no other precoding scheme can achieve the maximal DoF with a lower feedback scaling. This confirms the efficiency of ZF in terms of DoF, even when confronted with imperfect CSI. ZF represents also the most widely used scheme to manage interference at high SNR.

An exciting yet challenging question is whether there exist strictly better schemes (from a DoF point of view) dealing specifically with the distributed CSI case. This question is however beyond the scope of our work here.
\qed
\end{remark}
%The design of a precoding scheme that is \emph{optimally robust} in the context of distributed CSI is a research topic in its own right, and a challenging one \cite{dekerret2011_ISIT_journal}.
Although a given TX~$j$ may compute the whole precoding matrix $\mathbf{T}^{(j)}$, only the $j$-th row is of practical interest. Indeed, TX~$j$ transmits solely~$x_j=\bm{e}_{j}^{\He}\mathbf{T}^{(j)}\bm{s}$. The effective multi-user precoder~$\mathbf{T}$ verifies then
\begin{equation}
\xv=\mathbf{T}\bm{s}=
\begin{bmatrix}
\bm{e}_1^{\He}\mathbf{T}^{(1)}\\
\bm{e}_2^{\He}\mathbf{T}^{(2)}\\
\vdots\\
\bm{e}_K^{\He}\mathbf{T}^{(K)}
\end{bmatrix}\bm{s}.
\label{eq:SM_11}
\end{equation}
%We denote by the superscript~$\bullet^{\PCSI}$ the coefficients obtained when all the TXs have perfect CSI.

\begin{remark}Each TX independently proceeds with the normalization of the beamformer and based on a-priori different channel estimates. Hence, the power constraint is only approximately fulfilled. Yet, the power constraint is asymptotically fulfilled for all the DoF achieving CSIT allocations that we will consider in the following.
\qed
\end{remark}
Finally, we denote by~$\mathbf{T}^{\star}=[\bm{t}_1^{\star},\ldots,\bm{t}_K^{\star}]$ the precoder obtained with perfect CSI at all TXs. It then verifies
\begin{equation}
\bm{t}_i^{\star}\triangleq \sqrt{P}\frac{\left(\mathbf{H}\right)^{-1}\bm{e}_i}{\norm{\left(\mathbf{H}\right)^{-1}\bm{e}_i}},\qquad \forall i\in\{1,\ldots,K\}.
\label{eq:SM_10}
\end{equation} 
%%%%%%%%%%%%%%%%%%%%%%%%%%%%%%%%%%%%%%%%%%%%%%%%%%%%%%%%%%%%%%%%%%%%%%%
%%%%%%%%%%%%%%%%%%%%%%%%%%%%%%%%%%%%%%%%%%%%%%%%%%%%%%%%%%%%%%%%%%%%%%%
\subsection{Optimization of the CSIT allocation}\label{se:SM:Optimization}
%%%%%%%%%%%%%%%%%%%%%%%%%%%%%%%%%%%%%%%%%%%%%%%%%%%%%%%%%%%%%%%%%%%%%%%
%%%%%%%%%%%%%%%%%%%%%%%%%%%%%%%%%%%%%%%%%%%%%%%%%%%%%%%%%%%%%%%%%%%%%%%

Optimizing directly the allocation of the number of bits at finite SNR represents a challenging problem which gives little hope for analytical results. Instead, we will try to identify one CSIT allocation solution achieving the same DoF as under the fully shared CSIT setting.
\begin{definition}
We define the set of DoF-achieving CSIT allocations~$\mathbb{B}_{\DoF}(\bm{\Gamma})$ as
\begin{equation}
\mathbb{B}_{\DoF}(\bm{\Gamma})\triangleq\{\{\mathbf{B}^{(j)}\}_{j=1}^K|\forall i, \DoF_i(\{\mathbf{B}^{(j)}\}_{j=1}^K,\bm{\Gamma})=1\}.
\label{eq:SM_12}
\end{equation}
\end{definition}
Hence, an interesting problem consists in finding the minimal CSIT allocation (where minimality refers to the size in Definition~\ref{def_size}) which achieves the maximal generalized DoF at every user:
\begin{equation}
\text{minimize   }\Size\LB\{\mathbf{B}^{(j)}\}_{j=1}^K\RB \text{, subject to $\{\mathbf{B}^{(j)}\}_{j=1}^K\in \mathbb{B}_{\DoF}(\bm{\Gamma})$.}	
\label{eq:SM_13}
\end{equation}
In this paper, we focus on an ``achievability" result, by exhibiting a CSIT allocation that achieves the maximal DoF while having a much lower size than the conventional (uniform) CSIT allocation. Furthermore, the proposed ``achievable scheme" will prove to have particularly interesting properties, which distinguish it from other solutions in the literature (e.g., clustering). The problem of finding a minimal-size allocation policy while guaranteeing full DoF (i.e. DoF equal to the perfect CSIT case) is an interesting problem, but an extreme challenging one, which, to our best knowledge, remains open.

%%%%%%%%%%%%%%%%%%%%%%%%%%%%%%%%%%%%%%%%%%%%%%%%%%%%%%%%%%%%%%%%%%%%%%%
%%%%%%%%%%%%%%%%%%%%%%%%%%%%%%%%%%%%%%%%%%%%%%%%%%%%%%%%%%%%%%%%%%%%%%%
%%%%%%%%%%%%%%%%%%%%%%%%%%%%%%%%%%%%%%%%%%%%%%%%%%%%%%%%%%%%%%%%%%%%%%%
%%%%%%%%%%%%%%%%%%%%%%%%%%%%%%%%%%%%%%%%%%%%%%%%%%%%%%%%%%%%%%%%%%%%%%%
\section{Preliminary Results}\label{se:SM:Achievable}
%%%%%%%%%%%%%%%%%%%%%%%%%%%%%%%%%%%%%%%%%%%%%%%%%%%%%%%%%%%%%%%%%%%%%%%
%%%%%%%%%%%%%%%%%%%%%%%%%%%%%%%%%%%%%%%%%%%%%%%%%%%%%%%%%%%%%%%%%%%%%%%
%%%%%%%%%%%%%%%%%%%%%%%%%%%%%%%%%%%%%%%%%%%%%%%%%%%%%%%%%%%%%%%%%%%%%%%
%%%%%%%%%%%%%%%%%%%%%%%%%%%%%%%%%%%%%%%%%%%%%%%%%%%%%%%%%%%%%%%%%%%%%%%

As a preliminary step, we derive a simple sufficient criterion on the precoder for achieving the maximal DoF.
\begin{proposition}
The maximal DoF is achieved by using the precoder~$\mathbf{T}$ if the CSIT allocation~$\{\mathbf{B}^{(j)}\}_{j=1}^K$ is such that 
\begin{equation}
\E \left[\left\| \mathbf{T}-\mathbf{T}^{\star}\right\|_{\Fro}^2\right]\doteq P^{0}
\label{eq:Prelem_1}
\end{equation}
where the equivalence sign $f(P)\doteq P^b$ denotes the exponential equality $\lim_{P\rightarrow \infty} \frac{\log(f(P))}{\log(P)}=b$\cite{Zheng2003} and~$\mathbf{T}^{\star}$ has been defined previously as the precoder based on perfect CSIT.
\label{prop_sufficient}
\end{proposition}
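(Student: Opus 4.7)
The plan is to show that an $\doteq P^0$ Frobenius-norm precoder mismatch causes only an $o(\log_2 P)$ rate loss per receiver, which is invisible at the DoF level. I would split $R_i$ in \eqref{eq:SM_3} into a useful-signal contribution and an interference penalty, and bound each under the hypothesis, noting that the interference is generated \emph{entirely} by the mismatch because $\bm{t}_\ell^\star$ is the ZF precoder.

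For the interference, the zero-forcing identity $\bm{h}_i^{\He}\bm{t}_\ell^\star=0$ ($\ell\neq i$) gives
\begin{equation*}
I_i\triangleq\sum_{\ell\neq i}|\bm{h}_i^{\He}\bm{t}_\ell|^2=\sum_{\ell\neq i}|\bm{h}_i^{\He}(\bm{t}_\ell-\bm{t}_\ell^\star)|^2\leq\|\bm{h}_i\|^2\,\|\mathbf{T}-\mathbf{T}^\star\|_{\Fro}^2
\end{equation*}
by Cauchy--Schwarz. Using the elementary inequality $\log_2(1+ab)\leq \log_2(1+a)+\log_2(1+b)$ and then Jensen's inequality yields
\begin{equation*}
\E[\log_2(1+I_i)]\leq \E[\log_2(1+\|\bm{h}_i\|^2)]+\log_2(1+\E[\|\mathbf{T}-\mathbf{T}^\star\|_{\Fro}^2])\doteq P^0,
\end{equation*}
since $\|\bm{h}_i\|^2$ has moments bounded uniformly in $P$ (as $\sigma_{ki}^2\leq 1$ by the definition of $\bm{\Gamma}$) and $\E[\|\mathbf{T}-\mathbf{T}^\star\|_{\Fro}^2]\doteq P^0$ by hypothesis.

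For the useful signal, I would invoke the reverse triangle inequality $|\bm{h}_i^{\He}\bm{t}_i|\geq|\bm{h}_i^{\He}\bm{t}_i^\star|-\|\bm{h}_i\|\,\|\bm{t}_i-\bm{t}_i^\star\|$. Since $\bm{t}_i^\star=\sqrt{P}\,\mathbf{H}^{-1}\bm{e}_i/\|\mathbf{H}^{-1}\bm{e}_i\|$ yields $|\bm{h}_i^{\He}\bm{t}_i^\star|^2=P/\|\mathbf{H}^{-1}\bm{e}_i\|^2\doteq P$ almost surely while the perturbation term is $\doteq P^0$, the leading $\sqrt{P}$ term dominates on a high-probability event, so $|\bm{h}_i^{\He}\bm{t}_i|^2\doteq P$. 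Combining $R_i\geq \E[\log_2|\bm{h}_i^{\He}\bm{t}_i|^2]-\E[\log_2(1+I_i)]$ (from $\log(1+\SINR)\geq\log\SINR$) with $\E[\log_2|\bm{h}_i^{\He}\bm{t}_i|^2]=\log_2 P + o(\log_2 P)$ and the interference bound above gives $R_i/\log_2 P\to 1$, i.e., $\DoF_i=1$ per RX, matching the perfect-CSIT DoF.

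The most delicate step is passing from the in-probability scaling of $|\bm{h}_i^{\He}\bm{t}_i|^2$ to a lower bound on $\E[\log_2(\cdot)]$, since $\bm{h}_i$ and $\mathbf{T}$ are statistically coupled (each TX's precoder is built from a noisy version of $\mathbf{H}$ that contains $\bm{h}_i$ itself). One has to show, via Markov on the hypothesis together with standard Gaussian-tail estimates on $\|\mathbf{H}^{-1}\bm{e}_i\|$, that the bad events $\{\|\mathbf{H}^{-1}\bm{e}_i\|^2\gtrsim P\}$ and $\{\|\bm{h}_i\|\,\|\bm{t}_i-\bm{t}_i^\star\|\gtrsim\sqrt{P}\}$ are subexponentially unlikely, so their contribution to $\E[\log_2(\cdot)]$ is $\doteq P^0$ and the signal bound holds in expectation. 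This uniform-integrability bookkeeping is where the substance of the argument lies; everything else is essentially a direct consequence of the ZF structure of $\mathbf{T}^\star$.
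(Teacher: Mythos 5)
Your proposal is correct and follows essentially the same route as the paper's proof: the zero-forcing identity turns the residual interference into the precoder mismatch, Cauchy--Schwarz plus Jensen bounds its rate cost by $\log_2\left(\E\left[\|\mathbf{T}-\mathbf{T}^{\star}\|_{\Fro}^2\right]\right)\doteq P^0$, and the useful-signal power is argued to still scale as $P$. The only differences are presentational: the paper bounds the rate gap to the perfect-CSIT precoder and passes over the signal-power step as an asserted exponential equality, whereas you lower-bound $R_i$ directly and explicitly flag the uniform-integrability bookkeeping needed to turn the in-probability scaling of $|\bm{h}_i^{\He}\bm{t}_i|^2$ into a bound on $\E[\log_2(\cdot)]$.
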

\begin{proof}
A detailed proof is provided in Appendix~\ref{app:proof_sufficient}.
\end{proof} 
The condition obtained above is very intuitive and will be used in the remaining of this work. However, Proposition~\ref{prop_sufficient} does not solve the main question, which is to determine what kind of CSIT allocation allows to achieve~\eqref{eq:Prelem_1}. This question is central to this work and will be tackled in Section~\ref{se:ExpDec}.

%%%%%%%%%%%%%%%%%%%%%%%%%%%%%%%%%%%%%%%%%%%%%%%%%%%%%%%%%%%%%%%%%%%%%%%
\subsection{The conventional CSIT allocation is DoF achieving}\label{se:SM:Conv}
%%%%%%%%%%%%%%%%%%%%%%%%%%%%%%%%%%%%%%%%%%%%%%%%%%%%%%%%%%%%%%%%%%%%%%%
The term ``conventional" hereby corresponds to conveying to each TX the CSI relative to the full multi-user channel, enabling all the TXs to do the same processing and compute a common precoder~$\mathbf{T}^{(j)}=\hat{\mathbf{T}}$. Hence, the condition of Proposition~\ref{prop_sufficient} can be rewritten as
\begin{equation}
\E\left[\left\|\mathbf{T}^{(j)}-\mathbf{T}^{\star}\right\|_{\Fro}^2\right]\doteq P^{0},\qquad \forall j\in \{1,\ldots,K\}.
\label{eq:Prelem_2}
\end{equation}
Based on this, the following result is obtained.
\begin{proposition}
Considering the generalized DoF model where~$\sigma^2_{ki}=P^{-\Gamma_{k,i}},\forall k,i$, the following ``conventional" CSIT allocation~$\{\mathbf{B}^{\Conv, (j)}\}_{j=1}^K$ such that
\begin{align}
\{\mathbf{B}^{\Conv, (j)}\}_{k,i}&= [\lceil \log_2(P\sigma_{k,i}^2) \rceil]^{+},\qquad \forall k,i,j\\
&= \lceil [1-\Gamma_{k,i}]^{+}\log_2(P) \rceil
\label{eq:Prelem_3}
\end{align}  
is DoF achieving, i.e., $\{\mathbf{B}^{\Conv, (j)}\}_{j=1}^K \in \mathbb{B}_{\DoF}$.
\label{prop_conv}
\end{proposition}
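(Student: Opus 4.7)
The plan is to verify, for the conventional allocation, the sufficient condition $\E\left[\|\mathbf{T}-\mathbf{T}^{\star}\|_{\Fro}^2\right]\doteq P^{0}$ established in Proposition~\ref{prop_sufficient}, and then conclude by Proposition~\ref{prop_sufficient} itself. The starting point is the per-element CSIT error: under the allocation $B^{\Conv,(j)}_{ki}=[\lceil \log_2(P\sigma_{ki}^2)\rceil]^+$, the variance of the estimation error $\sigma_{ki}\sqrt{2^{-B^{\Conv,(j)}_{ki}}}\Delta\tilde{H}^{(j)}_{ki}$ equals $\sigma_{ki}^2 \, 2^{-B^{\Conv,(j)}_{ki}}$, which is bounded above by $\max(\sigma_{ki}^2, P^{-1})\dotleq P^{-1}$ (the second case applies when $P\sigma_{ki}^2<1$, in which case $\sigma_{ki}^2\le P^{-1}$ directly). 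Summing over the finite index set yields $\E\|\mathbf{H}^{(j)}-\mathbf{H}\|_{\Fro}^2 \dotleq P^{-1}$, uniformly in $j$.

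Next, I would pass from the channel-estimation error to the precoder error via the standard matrix-inverse perturbation expansion. Writing $\mathbf{H}^{(j)}=\mathbf{H}+\boldsymbol{\Delta}^{(j)}$, one has
\begin{equation}
\bigl(\mathbf{H}^{(j)}\bigr)^{-1}-\mathbf{H}^{-1} \;=\; -\mathbf{H}^{-1}\boldsymbol{\Delta}^{(j)}\mathbf{H}^{-1}+O\bigl(\|\boldsymbol{\Delta}^{(j)}\|^2\bigr),
\end{equation}
so $\|(\mathbf{H}^{(j)})^{-1}-\mathbf{H}^{-1}\|_{\Fro}^2$ is of order $\|\mathbf{H}^{-1}\|_2^4 \cdot \|\boldsymbol{\Delta}^{(j)}\|_{\Fro}^2$. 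Since $\mathbf{H}$ is a $K\times K$ matrix with i.i.d.~Gaussian entries (up to a diagonal rescaling), the inverse Wishart bounds on $\E[\|\mathbf{H}^{-1}\|_2^{2p}]$ are finite for any fixed $p$, and the small-probability near-singular events contribute only a $P^{0}$ factor. Combined with the previous step this gives $\E\|(\mathbf{H}^{(j)})^{-1}-\mathbf{H}^{-1}\|_{\Fro}^2 \dotleq P^{-1}$ and, after the unit-norm normalisation in \eqref{eq:SM_10}, $\E\|\bm{t}_i^{(j)}-\bm{t}_i^{\star}\|^2 \dotleq P^{0}$ (the factor $\sqrt{P}$ in front of the normalised beamformer cancels the $P^{-1}$ gain).

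Finally, because the effective precoder $\mathbf{T}$ stacks the rows $\bm{e}_j^{\He}\mathbf{T}^{(j)}$, we have
\begin{equation}
\E\|\mathbf{T}-\mathbf{T}^{\star}\|_{\Fro}^2 \;=\; \sum_{j=1}^K \E\bigl\|\bm{e}_j^{\He}(\mathbf{T}^{(j)}-\mathbf{T}^{\star})\bigr\|^2 \;\le\; \sum_{j=1}^K \E\|\mathbf{T}^{(j)}-\mathbf{T}^{\star}\|_{\Fro}^2 \;\dotleq\; P^{0},
\end{equation}
and the converse direction is trivial ($P^0$ is the smallest exponent compatible with a non-trivial quantity), so the sufficient condition \eqref{eq:Prelem_1} holds and Proposition~\ref{prop_sufficient} yields $\{\mathbf{B}^{\Conv,(j)}\}_{j=1}^K \in \mathbb{B}_{\DoF}(\bm{\Gamma})$.

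The main obstacle is the control of the expectation of $\|\mathbf{H}^{-1}\|_2^4$: Gaussian matrices are almost-surely invertible but may be arbitrarily ill-conditioned with vanishing probability, and one must check that these rare events do not destroy the $P^{-1}$ decay inherited from the channel estimate. For fixed $K$ this is a standard fact from random matrix theory (finite inverse-Wishart moments of sufficiently high order), but it must be invoked explicitly; everything else in the argument is a routine perturbation computation.
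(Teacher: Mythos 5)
Your proposal follows essentially the same route as the paper's proof: observe that the conventional allocation makes every entry of $\mathbf{H}^{(j)}-\mathbf{H}$ have variance $\dotleq P^{-1}$, propagate this through a perturbation expansion of the matrix inverse, and conclude via the sufficient condition of Proposition~\ref{prop_sufficient}. The paper uses the exact resolvent identity $\mathbf{A}^{-1}-\mathbf{B}^{-1}=\mathbf{B}^{-1}(\mathbf{B}-\mathbf{A})\mathbf{A}^{-1}$ applied twice, which is precisely your first-order expansion with an explicit remainder; your handling of the $[\cdot]^{+}$ clipping when $P\sigma_{ki}^2<1$ is a small improvement over the paper, which simply writes $\sigma_{ki}^2 2^{-B_{ki}^{(j)}}=P^{-1}$.

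The one concrete problem is the random-matrix fact you single out as the key ingredient: for a \emph{square} complex Gaussian matrix the inverse-Wishart moments you need are \emph{not} finite. For the $K\times K$ complex Wishart matrix $\mathbf{H}\mathbf{H}^{\He}$ with $K$ degrees of freedom, the density of $\lambda_{\min}$ is bounded away from zero at the origin (it is exactly exponential), so already $\E[\lambda_{\min}^{-1}]=\E[\|\mathbf{H}^{-1}\|_2^{2}]=+\infty$, let alone $\E[\|\mathbf{H}^{-1}\|_2^{4}]$; finiteness of such inverse moments requires strictly more rows than columns. Hence the step $\E\|(\mathbf{H}^{(j)})^{-1}-\mathbf{H}^{-1}\|_{\Fro}^2\dotleq P^{-1}$ cannot be closed by Cauchy--Schwarz against a finite fourth moment of $\|\mathbf{H}^{-1}\|_2$ as you propose. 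In fairness, the paper is no more rigorous at exactly this point (it asserts that the outer-diagonal decay makes $\mathbf{H}$ well conditioned so that ``all the expectations are finite,'' which also fails since $\E[1/|H_{ii}|^2]=\infty$ for a complex Gaussian diagonal entry). A watertight argument must instead exploit that $\|\mathbf{T}-\mathbf{T}^{\star}\|_{\Fro}^2$ is deterministically bounded by $4KP$ (both precoders have columns of norm $\sqrt{P}$) and truncate the ill-conditioned event, or carry the unit-norm beamformers through the bound as in the Jindal/Caire line of analysis. So: same architecture as the paper, but the one lemma you insist ``must be invoked explicitly'' is false in the form you state it, and this is where the real work lies.
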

\begin{proof}
A detailed proof is provided in Appendix~\ref{app:proof_conv}.
\end{proof}
This CSIT allocation provides to each TX the $K$~channel vectors relative to the $K$~RXs. The term~$\Gamma_{k,i}$ corresponds to the variance of the channel element~$H_{k,i}$ which is equal to~$P^{-\Gamma_{k,i}}$, and follows from well known results of rate-distorsion theory. This means that each TX requires a number of channel estimates growing unbounded with~$K$. This represents a serious issue in large/dense networks which prompts designers, in practice, to restrict cooperation to small cooperations clusters.

%%%%%%%%%%%%%%%%%%%%%%%%%%%%%%%%%%%%%%%%%%%%%%%%%%%%%%%%%%%%%%%%%%%%%%%
\subsection{CSIT allocation with distributed precoding}\label{se:SM:Achievable:Dist}
%%%%%%%%%%%%%%%%%%%%%%%%%%%%%%%%%%%%%%%%%%%%%%%%%%%%%%%%%%%%%%%%%%%%%%%

We now turn our attention to the derivation of a more efficient CSIT allocation strategy. A crucial observation is that each TX does not need to compute accurately the \emph{full} precoder. Indeed, the sufficient criterion~\eqref{eq:Prelem_1} can be written in the distributed CSI setting as
\begin{equation}
\E\left[\left\|\bm{e}_j^{\He}(\mathbf{T}^{(j)}-\mathbf{T}^{\star})\right\|^2\right]\doteq P^{0},\qquad \forall j\in \{1,\ldots,K\}.
\label{eq:Prelem_4}
\end{equation}
Intuition has it that a channel coefficient relative to a TX/RX pair which interferes little with TX/RX~$j$ has little impact on the $j$th precoding row and hence does not need to be known accurately at TX~$j$. What follows is a quantitative assessment of this intuition.

%%%%%%%%%%%%%%%%%%%%%%%%%%%%%%%%%%%%%%%%%%%%%%%%%%%%%%%%%%%%%%%%%%
%%%%%%%%%%%%%%%%%%%%%%%%%%%%%%%%%%%%%%%%%%%%%%%%%%%%%%%%%%%%%%%%%%
%%%%%%%%%%%%%%%%%%%%%%%%%%%%%%%%%%%%%%%%%%%%%%%%%%%%%%%%%%%%%%%%%%
%%%%%%%%%%%%%%%%%%%%%%%%%%%%%%%%%%%%%%%%%%%%%%%%%%%%%%%%%%%%%%%%%%
\section{Distance-Based CSIT Allocation}\label{se:ExpDec}
%%%%%%%%%%%%%%%%%%%%%%%%%%%%%%%%%%%%%%%%%%%%%%%%%%%%%%%%%%%%%%%%%%
%%%%%%%%%%%%%%%%%%%%%%%%%%%%%%%%%%%%%%%%%%%%%%%%%%%%%%%%%%%%%%%%%%
%%%%%%%%%%%%%%%%%%%%%%%%%%%%%%%%%%%%%%%%%%%%%%%%%%%%%%%%%%%%%%%%%%
%%%%%%%%%%%%%%%%%%%%%%%%%%%%%%%%%%%%%%%%%%%%%%%%%%%%%%%%%%%%%%%%%% 
%We consider now a given $2$-dimensional network consisting of $K$~TX/RX pairs for a given interference level matrix~$\bm{\Gamma}$ (defined from the channel variances and the transmit power according to~\eqref{eq:SM_6}), which then determines the value of the $\sigma_{ki}$ from $\sigma_{ki}^2=P^{-\{\bm{\Gamma}\}_{ki}},\forall k,i$. We remind the reader that we have assumed that~$\{\bm{\Gamma}\}_{ii}=1,\forall i$.

%%%%%%%%%%%%%%%%%%%%%%%%%%%%%%%%%%%%%%%%%%%%%%%%%%%%%%%%%%%%%%%%%%
%%%%%%%%%%%%%%%%%%%%%%%%%%%%%%%%%%%%%%%%%%%%%%%%%%%%%%%%%%%%%%%%%% 
\subsection{Distance-based CSIT allocation}\label{se:Exp:CSI}
%%%%%%%%%%%%%%%%%%%%%%%%%%%%%%%%%%%%%%%%%%%%%%%%%%%%%%%%%%%%%%%%%%
%%%%%%%%%%%%%%%%%%%%%%%%%%%%%%%%%%%%%%%%%%%%%%%%%%%%%%%%%%%%%%%%%%  
Before stating our main result, we first define the notion of shortest path which will be needed for the theorem. 
\begin{definition} 
We define a path from TX~$j$ to RX~$k$ as the tuple~$(a_1,\ldots,a_n)$ with $a_i\in\{1,\ldots,K\}$ being the index of a TX/RX pair and $a_1=j$ and $a_n=k$. Given the interference level matrix~$\bm{\Gamma}$, the length~$\mathrm{L}(a_1,\ldots,a_n)$ of a path is then given by 
\begin{equation}
\mathrm{L}(a_1,\ldots,a_n)=\sum_{i=1}^{n-1}\Gamma_{a_{i+1},a_{i}}.
\label{eq:exp_0}
\end{equation}
We can then define the shortest path from TX~$j$ to RX~$k$, which we denote by~$\Gamma_{j \rightarrow k}$, in the sense that
\begin{equation}
\Gamma_{j\rightarrow k}\triangleq \min_{(a_2\ldots,a_{n-1})} \mathrm{L}(j,a_2\ldots,a_{n-1},k).
\label{eq:exp_1}
\end{equation}
%\begin{equation}
%\begin{aligned}
%\Gamma_{k\rightarrow j}&\triangleq \min_{(k,a_2\ldots,a_{n-1},j)} \mathrm{L}(k,a_2\ldots,a_{n-1},j)\\
%&~~~~~~ \text{subject to $a_1=k$ and $a_n=j$.}
%\end{aligned}
%\label{eq:exp_1}
%\end{equation}
%
\end{definition}  
\begin{example}
Let us consider the shortest path~$\Gamma_{1\rightarrow 3}$ in a network with $3$ TX/RX pairs. Keeping in mind that~$\Gamma_{i,i}=0,\forall i$, it is then simply equal to 
\begin{equation}
\Gamma_{1\rightarrow 3}=\min\LB \Gamma_{3,2}+\Gamma_{2,1},\Gamma_{3,1}\RB.
\end{equation}
\qed
\end{example}
We can now state our first main result.
\begin{theorem}
\label{thm}
Let us define the CSIT allocation~$\{\mathbf{B}^{\Dist,(j)}\}_{j=1}^K$ such that 
\begin{align} 
\{\mathbf{B}^{\Dist,(j)}\}_{k,i}&\triangleq \lceil [1-\Gamma_{k,i}-\gamma^{(j)}_{k,i}]^{+}\log_2(P) \rceil ,\qquad \forall k,i,j 
\label{eq:exp_2}
\end{align}  
with 
\begin{equation}
\gamma^{(j)}_{k,i}\triangleq \min\LB\Gamma_{k\rightarrow j},\min_{\ell}\Gamma_{\ell \rightarrow i}+\Gamma_{j,\ell}\RB
\label{eq:exp_3}
\end{equation}
Then~$\mathbf{B}^{\Dist}\in \mathbb{B}_{\DoF}$.
\end{theorem}  
\begin{proof} 
A detailed proof is provided in Appendix~\ref{app:proof_thm}.
\end{proof} 
If the interference level matrix~$\bm{\Gamma}$ tends to the zero matrix~$\bm{0}_{K}$, there is then no attenuation of the interference due to the pathloss and the distance-based CSIT allocation converges as expected to the conventional CSIT allocation given in~\eqref{eq:Prelem_3}. More generally, the distance-based CSIT allocation exploits the fact that if two TX/RX pairs interfere only through weak channels, they need to exchange a small amount of CSI. 
\begin{remark}
If the interference level matrix is a symmetric matrix, $\gamma^{(j)}_{k,i}$ is then equal to
\begin{equation}
\gamma^{(j)}_{k,i}=\min\LB\Gamma_{k\rightarrow j},\Gamma_{i \rightarrow j}\RB
\label{eq:exp_4}
\end{equation}
Another interesting case arises if the interference level matrix satisfies that
\begin{equation}
\Gamma_{k,j}\leq \Gamma_{k,i}+\Gamma_{i,j},\qquad \forall i,j,k
\label{eq:exp_5}
\end{equation}
It then holds that
\begin{equation}
\Gamma_{k\rightarrow j}=\Gamma_{j,k} .
\label{eq:exp_6}
\end{equation}
In particular, the relation \eqref{eq:exp_5} is satisfied when long term attenuation introduces a notion of distance between the TX/RX pairs.
\qed
\end{remark} 
Building upon the proof of Theorem~\ref{thm}, it is also possible to comment equation~\eqref{eq:exp_2} to obtain interesting insights:
\begin{itemize}
\item The term $\Gamma_{k,i}$ follows from the variance of the element to quantize and is also present in the conventional CSIT allocation. Hence, the CSIT reduction comes from the parameter~$\gamma_{k,i}^{(j)}$. 
\item The first term~$\Gamma_{k\rightarrow j}$ corresponds to a sufficient CSIT allocation such that~$\bm{e}_j^{\He}\mathbf{H}^{-1}\bm{e}_i$ is known at TX~$j$ with a sufficient accuracy for every $i$.
\item With the second term~$\min_{\ell}\Gamma_{\ell \rightarrow i}+\Gamma_{j, \ell}$, the CSIT allocation obtained ensures that the norm~$\|\bH^{-1}\bm{e}_i\|$ is known for every~$i$ with a sufficient accuracy at TX~$j$.
\end{itemize}
Taken together (which explains the $\min$ in \eqref{eq:exp_3}), these two requirements allow to compute~$\bm{e}_j^{\He}\bH^{-1}\bm{e}_i/\norm{\bH^{-1}\bm{e}_i}=\bm{e}_j^{\He}\bm{t}_i$ at TX~$j$ for every~$i$ with an accuracy sufficient for achieving the maximal DoF at each RX.
%%%%%%%%%%%%%%%%%%%%%%%%%%%%%%%%%%%%%%%%%%%%%%%%%%%%%%%%%%%%%%%%%%
%%%%%%%%%%%%%%%%%%%%%%%%%%%%%%%%%%%%%%%%%%%%%%%%%%%%%%%%%%%%%%%%%% 
\subsection{Scaling properties of the Distance-based CSIT allocation} \label{se:Exp:Properties}
%%%%%%%%%%%%%%%%%%%%%%%%%%%%%%%%%%%%%%%%%%%%%%%%%%%%%%%%%%%%%%%%%%
%%%%%%%%%%%%%%%%%%%%%%%%%%%%%%%%%%%%%%%%%%%%%%%%%%%%%%%%%%%%%%%%%% 
For clarity, we consider in the following that the interference level matrix is symmetric such that~$\Gamma_{k,i}=\Gamma_{i,k},\forall k,i$. It also corresponds to most of the practically relevant scenarios. Similar conclusions also hold in the asymmetric scenarios.

An important property of a CSIT allocation is its scaling behaviour as the number of TX/RX pairs increases. It is clear that this property depends on the network geometry. For example, if all the TX/RX pairs are collocated, full CSIT sharing is required while it may not be the case in typical network deployment scenarios, as shown previously. We focus now on the setting where each RX is ``interfered" by only a finite number of TXs with an average power larger than $1/P$, who represent the significant interferers in terms of DoF. That is to say
\begin{equation}
\lim_{K\rightarrow \infty} \left|\left\{i|\bm{\Gamma}_{i\rightarrow j}<1\right\}\right|<\infty,\qquad \forall j.
\label{eq:exp_7}
\end{equation}
We will show in the following subsection that this condition is fulfilled in the networks practically encountered in wireless communications.
\begin{corollary} 
Let us consider a symmetric network where condition \eqref{eq:exp_7} is satisfied. It then holds that
\begin{equation}
\lim_{K\rightarrow \infty} \Size(\mathbf{B}^{\Dist,(j)})<\infty,\qquad \forall j.
\label{eq:exp_8}
\end{equation} 
\end{corollary}  
\begin{proof}
This result follows directly from observing that $\Gamma_{k,i}\geq \Gamma_{i \rightarrow k}$ in the distance-based CSIT allocation in \eqref{eq:exp_2}. It follows then trivially from \eqref{eq:exp_7} that there are only a finite number of nonzero $\{\mathbf{B}^{\Dist,(j)}\}_{k,i}$ at every TX~$j$.
\end{proof}
This result is in stark contrast with the conventional CSIT allocation where the size~$\Size(\mathbf{B}^{\Conv,(j)})$ scales linearly with~$K$. This corollary confirms the intuition that a CSIT-exchange restricted to a finite neighborhood is sufficient to achieve global coordination from a DoF point of view. 

Yet, we have considered so far a scenario with global sharing of the user's data symbols. The result above leads to ask ourselves whether this assumption is necessary or if it is possible to reduce the sharing of the user's data symbols without impacting the DoF achieved.
\begin{corollary} 
\label{corollary_data_Exp}
Let us consider a symmetric network where condition \eqref{eq:exp_7} is satisfied. Let us further denote by $\mathcal{K}_j$ the set containing the user's data symbols which have to be shared to TX~$j$ in order to achieve the maximal DoF at every RX. It then holds that   
\begin{equation}
\lim_{K\rightarrow \infty}|\mathcal{K}_j|<\infty,\qquad \forall j\in \{1,\ldots, K\}.
\label{eq:exp_9}
\end{equation}  
\end{corollary}
\begin{proof}
It can be seen from the expression of the precoder coefficients as an infinite summation in the proof of Theorem~\ref{thm} in Appendix~\ref{app:proof_thm} that
\begin{equation}
\E[|\bm{e}_j^{\He}\bH^{-1}\bm{e}_i|^2]\dotleq P^{- \Gamma_{j,i}},\qquad \forall i,j.
\label{eq:exp_10}
\end{equation}   
Setting to~$0$ all the coefficients in the precoder with an exponent of $P$ smaller than~$-1$ leads to additional interferences which tend to zero as the SNR increases and are hence negligible in terms of DoF. If the coefficient $(j,i)$ is set to~$0$ at TX~$j$, this means that TX~$j$ does not need to receive the user's data symbol~$s_i$. From the assumption \eqref{eq:exp_7}, there are then only a finite number of user's data symbols which need to be known at a given TX.
\end{proof}
%\begin{remark}
%The results can in fact be adapted to the case where 
%\begin{equation}
%\lim_{K\rightarrow \infty}\sum_{j=1}^K[1-\{\bm{\Gamma}\}_{ij}]^+<\infty,\qquad \forall i.
%\label{eq:exp_5}
%\end{equation}
%Since condition \eqref{eq:exp_5} holds in most of the practical settings encountered as a consequence of the pathloss, we have restricted our analysis to this case.
%\qed
%\end{remark}
The operational meaning of the above result is that because of the pathloss attenuation it is possible to achieve DoF-perfect coordination on the TX side with each TX exchanging information (CSI or data symbol) only to a local neighborhood.

%%%%%%%%%%%%%%%%%%%%%%%%%%%%%%%%%%%%%%%%%%%%%%%%%%%%%%%%%%%%%%%%%%
%%%%%%%%%%%%%%%%%%%%%%%%%%%%%%%%%%%%%%%%%%%%%%%%%%%%%%%%%%%%%%%%%% 
\subsection{Scaling Behaviour in Wireless Networks} \label{se:Exp:Polynomial}
%%%%%%%%%%%%%%%%%%%%%%%%%%%%%%%%%%%%%%%%%%%%%%%%%%%%%%%%%%%%%%%%%%
%%%%%%%%%%%%%%%%%%%%%%%%%%%%%%%%%%%%%%%%%%%%%%%%%%%%%%%%%%%%%%%%%%  
We have studied above the properties of the transmission for a given interference level matrix~$\bm{\Gamma}$. We will now show how these results can be used to model the transmission in realistic settings. The first step is to discuss how the interference-level matrix is obtained from the network configuration. Let us consider a network with a polynomial pathloss with exponent~$\varepsilon>0$, which corresponds to the conventional model for wireless networks\cite{Tse2005}. This means that the long term attenuation between TX~$j$ and RX~$i$ is equal to $d_{i,j}^{-\varepsilon}$ where $d_{i,j}$ is the distance separating TX~$j$ and RX~$i$. The operational SNR is set to be $P_0$ such that we obtain that
\begin{equation} 
\Gamma_{i,j}=-\frac{\log(d_{i,j}^{-\varepsilon})}{\log(P_0)},\qquad \forall i,j.
\label{eq:exp_11}
\end{equation}
%As $d_{ij}$ increases, $\{\bm{\Gamma}\}_{ij}$ tends to $-\infty$ such that $[1-\{\bm{\Gamma}\}_{ij}]^+$ will be equal to zero. 
Upon defining
\begin{equation}
d_0\triangleq P_0^{\frac{1}{\varepsilon}},
\label{eq:exp_12}
\end{equation}
we can see that
\begin{equation}
\Gamma_{i,j}>1,\qquad \text{if $d_{ij}>d_0$.}
\label{eq:exp_13}
\end{equation}

We are particularly interested in this section in the scaling behaviour as the number of TX/RX pairs increases. This requires defining more precisely the spatial distribution of the TX/RX pairs. It is differentiated in the literature between so-called \emph{dense} networks and \emph{extended} networks\cite{Xie2004,Ozgur2007}. In the first model, the size of the network remains constant and the density (number of TX/RX pairs/$m^2$) increases, while in the second the density of the network remains constant as the number of TX/RX pairs increases. Our analysis being on large networks, we consider the extended model and we assume that the density of TX/RX pairs remains constant.

It follows from \eqref{eq:exp_13} and from the extended model that the necessary condition \eqref{eq:exp_7} is fulfilled in this network configuration. Hence, the corollaries provided above can be applied. Practically, this means that it is possible to achieve the performance of global cooperation with cooperation of the TXs restricted to a local scale. Altogether, the distance-based CSIT allocation along with the matching limited user’s data sharing provides an attractive alternative to clustering. The difference being that the hard-boundaries of the clusters are replaced by a smooth decrease of the level of cooperation.
%%%%%%%%%%%%%%%%%%%%%%%%%%%%%%%%%%%%%%%%%%%%%%%%%%%%%%%%%%%%%%%%%%%%%%%
%%%%%%%%%%%%%%%%%%%%%%%%%%%%%%%%%%%%%%%%%%%%%%%%%%%%%%%%%%%%%%%%%%%%%%%
%%%%%%%%%%%%%%%%%%%%%%%%%%%%%%%%%%%%%%%%%%%%%%%%%%%%%%%%%%%%%%%%%%%%%%%
%%%%%%%%%%%%%%%%%%%%%%%%%%%%%%%%%%%%%%%%%%%%%%%%%%%%%%%%%%%%%%%%%%%%%%%
\section{Simulations}\label{se:sim} 
%%%%%%%%%%%%%%%%%%%%%%%%%%%%%%%%%%%%%%%%%%%%%%%%%%%%%%%%%%%%%%%%%%%%%%%
%%%%%%%%%%%%%%%%%%%%%%%%%%%%%%%%%%%%%%%%%%%%%%%%%%%%%%%%%%%%%%%%%%%%%%%  
%%%%%%%%%%%%%%%%%%%%%%%%%%%%%%%%%%%%%%%%%%%%%%%%%%%%%%%%%%%%%%%%%%%%%%%
%%%%%%%%%%%%%%%%%%%%%%%%%%%%%%%%%%%%%%%%%%%%%%%%%%%%%%%%%%%%%%%%%%%%%%%
 
We verify now by simulations that the maximal DoF per user is achieved by the distance based CSIT allocation. At the same time, we compare the distance based CSIT allocation to the CSI disseminations commonly used, i.e., uniform CSIT allocation and clustering. 

We consider a wireless model with polynomial attenuation as described in Subsection~\ref{se:Exp:Polynomial}. We choose~$\varepsilon=2$, $P_0=30$dB and the interference-level matrix~$\bm{\Gamma}$ is obtained from \eqref{eq:exp_11}. Note that we consider only $d_{i,j}>1$ to ensure that the interfering links are weaker than the direct links. We use Monte-Carlo averaging over $1000$~channel realizations. 

In a first step, we study a network with a regular geometry where $K=36$~TX/RX pairs are placed at the integer values inside a square of dimensions~$6\times 6$. We show in Fig.~\ref{R_square_K36} the average rate achieved with different CSIT allocation policies. Specifically, the distance-based CSIT allocation in \eqref{eq:exp_5} is compared to two alternative CSIT allocations, being the uniform CSIT allocation~$\{\bB^{\mathrm{unif},(j)}\}_{j=1}^K$ where the bits are allocated \emph{uniformly} to the TXs, and the clustering one~$\{\bB^{\mathrm{cluster},(j)}\}_{j=1}^K$ in which (non-overlapping) \emph{regular clustering of size~$4$} is used. Both CSIT allocations are chosen to have the same size as the distance-based one:
\begin{equation}
\Size\!\LB\!\{\bB^{\mathrm{unif},(j)}\}_{j=1}^K \!\RB\!=\!\Size\!\LB\{\bB^{\mathrm{cluster},(j)}\}_{j=1}^K\!\RB\!=\!\Size\!\LB\!\{\bB^{(\Dist,(j)}\}_{j=1}^K\!\RB\!.
\end{equation}

\begin{figure}[htp!] 
\centering
\includegraphics[width=1\columnwidth]{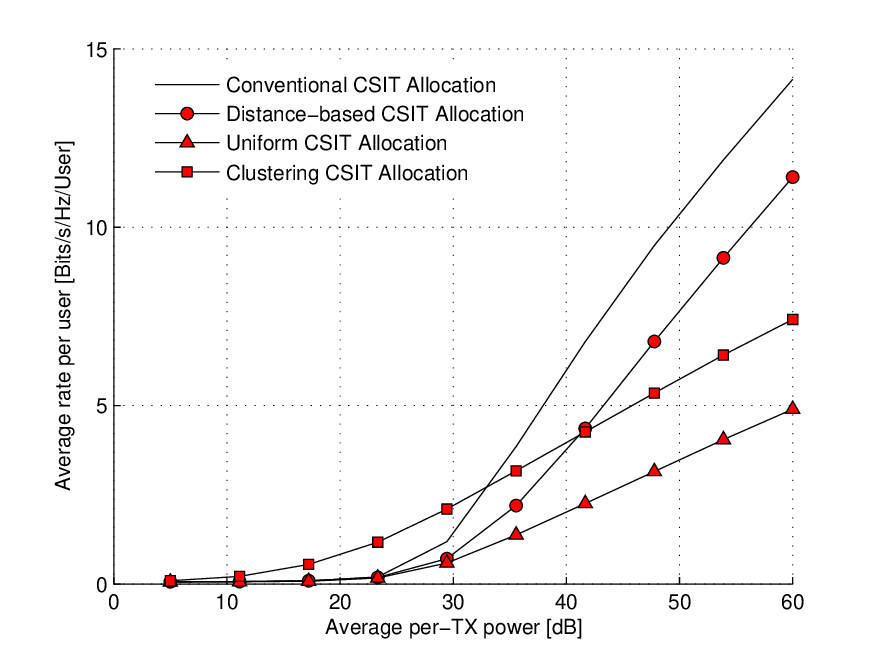}
\caption{Average rate per user as a function of the SNR~$P$ for $K=36$ with the polynomial model described in Subsection~\ref{se:Exp:Polynomial}. The TX/RX pairs are positioned at the integers values inside a square of dimensions~$6\times 6$. The $3$ limited feedback CSIT allocations used have the same size which is equal to~$9\%$ of the size of the conventional CSIT allocation in~\eqref{eq:Prelem_3}.}
\label{R_square_K36}
\end{figure}

With these parameters, the size of the distance based CSIT allocation is only equal to $9\%$ of the size of the conventional CSIT allocation. Nevertheless, it can be observed to achieve the maximal generalized DoF while the clustering solution has a smaller slope. The distance-based CSIT allocation suffers from a strong negative rate offset. However, this offset is a consequence of our analysis being limited to the high SNR regime and can be also observed in the fact that clustering outperforms ZF based on the conventional CSIT-allocation, which represents in fact the true reference for our scheme. Indeed, using ZF with many users is very inefficient at intermediate SNR, particularly in a network with strong pathloss. Furthermore, the number of TX/RX pairs~$K$ which is here relatively large, has not been taken into account. Hence, this strong negative rate offset can be easily reduced by optimizing the precoding scheme and the CSIT allocation at finite SNR. The key element being that the distance-based CSIT allocation does not present the usual limitations of clustering, i.e., edge-interference and bad scaling properties as the size of the cluster increases.

Finally, we show in Fig.~\ref{R_random_K15} the average rate per user in a network made of $K=15$ TX/RX pairs being located uniformly at random over the same square of dimensions~$6\times 6$. To verify the impact of allocating more --or less-- CSIT, we compare the average rate achieved with the distance-based CSIT allocation to the average rate obtained if we use the following variation of the CSIT allocation:
\begin{align} 
\{\mathbf{B}^{\Dist,(j)}\}_{k,i}(\alpha)\!=\! \lceil [1-\Gamma_{j,k}-\alpha \gamma^{(j)}_{k,i}]^{+}\log_2(P) \rceil,\forall k,i,j.
\label{eq:sim_1}
\end{align}  
This allows to observe the impact of reducing ($\alpha>1$) or increasing ($\alpha<1$) the CSIT compared to the distance-based CSIT allocation.

We can observe that reducing the CSIT allocation leads to reducing the slope, i.e., the DoF, while using more feedback bits leads to a vanishing rate offset. This is in agreement with our theoretical result that the distance-based CSIT allocation leads to a finite (bounded) rate offset.
 
 \begin{figure}[htp!] 
\centering
\includegraphics[width=1\columnwidth]{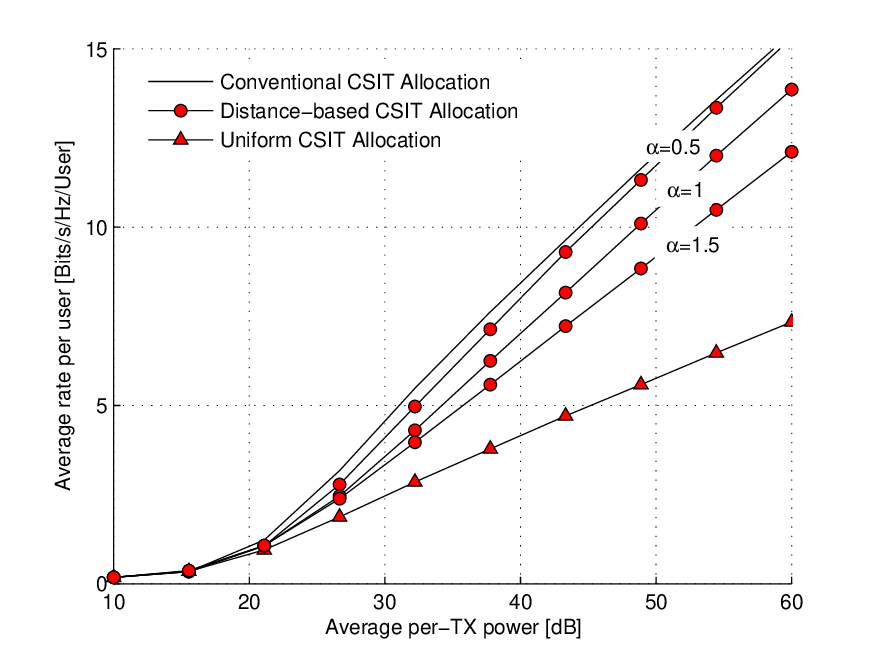}
\caption{Average rate per user as a function of the SNR~$P$ for $K=15$ for the channel model described in Subsection~\ref{se:Exp:Polynomial}. The TX/RX pairs are placed uniformly at random over a square of dimensions~$6\times 6$. The CSIT allocation with $\alpha=0.5$, $\alpha=1$, and $\alpha=1.5$, use respectively~$39\%$, $17\%$, and $13\%$ of the number of bits relative to the conventional CSIT allocation in~\eqref{eq:Prelem_3}.}
\label{R_random_K15}
\end{figure}

\FloatBarrier

%%%%%%%%%%%%%%%%%%%%%%%%%%%%%%%%% 
%%%%%%%%%%%%%%%%%%%%%%%%%%%%%%%%% 
\section{Conclusion}\label{se:conclusion}
%%%%%%%%%%%%%%%%%%%%%%%%%%%%%%%%% 
%%%%%%%%%%%%%%%%%%%%%%%%%%%%%%%%% 
We have discussed the problem of optimizing the CSIT dissemination in a network MIMO scenario. In particular, following a generalized DoF analysis, we have exhibited a CSIT allocation which allows to achieve the optimal generalized DoF while restricting the cooperation to a local scale. This behavior is critical for the cooperation of a large number of TXs to be practical. Hence, the proposed CSIT allocation appears as an alternative to clustering where the hard boundaries of the cluster are replaced by a smooth decrease of the cooperation strength. Our focus has been on the high SNR performance, and the distance-based CSIT allocation should be further optimized to lead to gain in realistic transmissions. Yet, it appears to have a strong potential as an alternative to clustering. In addition, we have considered only the CSIT requirements in order to achieve global interference management. The design of feedback schemes and backhaul links allowing to achieve these requirements represents another very interesting research area.  

This work shows that the CSIT requirements do not have to scale unbounded with the size of the network, which differs from the conclusions of several works from the literature. This is a consequence from letting the pathloss increase with the SNR, which makes the pathloss non-negligible at high SNR. We believe that this is the proper modelization of the pathloss in order to keep the impact of the network geometry, which, in contrast, becomes negligible in a DoF analysis with fixed pathloss.

%%%%%%%%%%%%%%%%%%%%%%%%%%%%%%%%%%%%%%%%%%%%%%%%%%%%
%%%%%%%%%%%%%%%%%%%%%%%%%%%%%%%%%%%%%%%%%%%%%%%%%%%%
%%%%%%%%%%%%%%%%%%%%%%%%%%%%%%%%%%%%%%%%%%%%%%%%%%%%
%%%%%%%%%%%%%%%%%%%%%%%%%%%%%%%%%%%%%%%%%%%%%%%%%%%%
\appendices
%%%%%%%%%%%%%%%%%%%%%%%%%%%%%%%%%%%%%%%%%%%%%%%%%%%%
%%%%%%%%%%%%%%%%%%%%%%%%%%%%%%%%%%%%%%%%%%%%%%%%%%%%
%%%%%%%%%%%%%%%%%%%%%%%%%%%%%%%%%%%%%%%%%%%%%%%%%%%%
%%%%%%%%%%%%%%%%%%%%%%%%%%%%%%%%%%%%%%%%%%%%%%%%%%%%

%%%%%%%%%%%%%%%%%%%%%%%%%%%%%%%%%%%%%%%%%%%%%%%%%%%%
%%%%%%%%%%%%%%%%%%%%%%%%%%%%%%%%%%%%%%%%%%%%%%%%%%%%
\section{Preliminary remark}\label{app:preliminary}
%%%%%%%%%%%%%%%%%%%%%%%%%%%%%%%%%%%%%%%%%%%%%%%%%%%%
%%%%%%%%%%%%%%%%%%%%%%%%%%%%%%%%%%%%%%%%%%%%%%%%%%%%
We start by a remark which will allow us to consider all the expectations of channel elements as finite. For a given $\varepsilon>0$, let us consider the channels satisfying
\begin{align}
\varepsilon<|H_{i,k}|^2<\frac{1}{\varepsilon},\qquad \forall k,i
\label{eq:General_proof_1} 
\end{align} 
such that the expectations of fractions of channel coefficients will all be finite. The channel elements being Chi-$2$ distributed, their are distributed such that
\begin{align}
\Pr\{|H_{i,k}|^2<x\}=1-\exp\LB-\frac{x}{2}\RB.
\label{eq:General_proof_2} 
\end{align}
Hence, considering only the channels verifying \eqref{eq:General_proof_1} and assuming the DoF loss to be maximal for the other channel realizations (i.e., equal to~$1$ for each user) can be easily shown to lead to a DoF loss in the order of $(O(\varepsilon))$. Letting $\varepsilon$ tend to zero, the $(O(\varepsilon))$ term vanishes. Hence, considering only the channels verifying the above condition does not modify the DoF and we will in the following consider only such channel realizations.

%%%%%%%%%%%%%%%%%%%%%%%%%%%%%%%%%%%%%%%%%%%%%%%%%%%%
%%%%%%%%%%%%%%%%%%%%%%%%%%%%%%%%%%%%%%%%%%%%%%%%%%%%
\section{Proof of Proposition~\ref{prop_sufficient}}\label{app:proof_sufficient}
%%%%%%%%%%%%%%%%%%%%%%%%%%%%%%%%%%%%%%%%%%%%%%%%%%%%
%%%%%%%%%%%%%%%%%%%%%%%%%%%%%%%%%%%%%%%%%%%%%%%%%%%%
\begin{proof}
We start by defining the rate difference $\Delta_{\Rate,i}$ between the rate of user~$i$ based on perfect CSI and the rate achieved with limited feedback. %Remember he normalized beamformer~$\bm{t}_i\triangleq \bm{t}_i/\norm{\bm{t}_i}$ with limited feedback and its counter-art~$\bm{t}_i^{\star}\triangleq \bm{t}_i^{\star}/\norm{\bm{t}_i^{\star}}$ based on CSI. 
As in \cite{Jindal2006,Caire2010}, we can then write
\begin{align}
&\Delta_{\Rate,i}\notag\\
&\!\triangleq\!\E \!\bigg[\log_2(1\!+\!|\bm{h}_i^{\He}\bm{t}^{\star}_i|^2)\bigg] \!-\!\E \bigg[\log_2\bigg(\!1\!+\!\frac{|\bm{h}_i^{\He}\bm{t}_i|^2}{1\!+\!\sum_{j\neq i}|\bm{h}_i^{\He}\bm{t}_j|^2}\!\bigg)\!\bigg]\label{eq:roof_Sufficient_1_1}\\
&\!=\!\E\bigg[\log_2\!\bigg(\!\frac{1\!+\!|\bm{h}_i^{\He}\bm{t}^{\star}_i|^2}{1\!+\!\sum_{j=1}^K |\bm{h}_i^{\He}\bm{t}_j|^2}\!\bigg)\!\bigg] \!+\!\E\!\bigg[\log_2\!\bigg(1\!+\!\sum_{j\neq i}|\bm{h}_i^{\He}\bm{t}_j|^2\!\bigg)\!\bigg]\label{eq:roof_Sufficient_1_2}\\
&=\!\E\!\bigg[\log_2\!\bigg(1\!+\!\sum_{j\neq i}|\bm{h}_i^{\He}\bm{t}_j|^2\!\bigg)\!\bigg]+o(\log(P))
\label{eq:roof_Sufficient_1_3}
\end{align}  
where we have denoted by~$\bm{t}^{\star}_i$ the $i$th ZF beamformer based on perfect CSIT. We further obtain
\begin{align}
\Delta_{\Rate,i}&\!=\!\E\!\bigg[\log_2\!\bigg(\!1\!+\!\sum_{j\neq i}|\bm{h}_i^{\He}(\bm{t}_j^{\star}+(\bm{t}_j-\bm{t}_j^{\star}))|^2\!\bigg)\!\bigg]\!+\!o(\log(P))\label{eq:roof_Sufficient_2_1}\\
%&=\!\E\!\bigg[\log_2\!\bigg(1\!+\!\sum_{j\neq i}(|\bm{h}_i^{\He}\bm{t}_j^{\star}|^2+|\bm{h}_i^{\He}(\bm{t}_j-\bm{t}_j^{\star})|^2)\!\bigg)\!\bigg]\label{eq:roof_Sufficient_2_2}\\
&=\!\E\!\bigg[\log_2\!\bigg(1\!+\!\!\sum_{j\neq i} |\bm{h}_i^{\He}(\bm{t}_j-\bm{t}_j^{\star})|^2)\!\bigg)\!\bigg]\!+\!o(\log(P)).
\label{eq:roof_Sufficient_2_3}
\end{align}  
We can then easily upper-bound \eqref{eq:roof_Sufficient_2_3} to write 
\begin{align}
\Delta_{\Rate,i}&\leq \E \bigg[\log_2\bigg(1+\norm{\bm{h}_i}^2\sum_{j\neq i}\norm{\bm{t}_j-\bm{t}_j^{\star}}^2\bigg)\bigg]+o(\log(P))\label{eq:roof_Sufficient_3_1}\\
&\stackrel{(a)}{\leq}\E\bigg[\log_2\bigg(1+ \norm{\mathbf{T}-\mathbf{T}^{\star}}_{\Fro}^2\bigg)\bigg] \notag \\
&\qquad\qquad+\E\bigg[\log_2\bigg(1+\norm{\bm{h}_i}^2\bigg)\bigg]+o(\log(P)) \label{eq:roof_Sufficient_3_2}\\
&\stackrel{(b)}{\leq} \log_2\bigg(\E\bigg[\norm{\mathbf{T}-\mathbf{T}^{\star}}_{\Fro}^2\bigg]\bigg)\!+\!o(\log(P))\label{eq:roof_Sufficient_3_3}\\
&\leq o(\log(P))
\label{eq:roof_Sufficient_3_4}
\end{align}	
where inequality $(a)$ follows from the property that for $a,b\geq 0$, then $\log(1+ab)\leq \log(1+a)+\log(1+b)$ and inequality $(b)$ from the assumption that~$\E\bigg[\norm{\mathbf{T}-\mathbf{T}^{\star}}_{\Fro}^2\bigg]\doteq P^{0}$. The maximal DoF is achieved if the rate difference $\Delta_{\Rate,i}/\log_2(P)$ tends to zero as the SNR increases, which is exactly what has been demonstrated above.
\end{proof}

%%%%%%%%%%%%%%%%%%%%%%%%%%%%%%%%%%%%%%%%%%%%%%%%%%%%
%%%%%%%%%%%%%%%%%%%%%%%%%%%%%%%%%%%%%%%%%%%%%%%%%%%%
\section{Proof of Proposition~\ref{prop_conv}}\label{app:proof_conv}
%%%%%%%%%%%%%%%%%%%%%%%%%%%%%%%%%%%%%%%%%%%%%%%%%%%%
%%%%%%%%%%%%%%%%%%%%%%%%%%%%%%%%%%%%%%%%%%%%%%%%%%%%

\begin{proof}
We consider without loss of generality the precoding at TX~$j$. Using the CSIT allocation in \eqref{eq:Prelem_3}, it holds that
\begin{equation}
\sigma_{k,i}\sqrt{2^{-B_{k,i}^{(j)}}}=\sqrt{\frac{1}{P}}	
\label{eq:prof_Conv_1}
\end{equation}
such that~$\bH^{(j)}=\bH+\sqrt{\frac{1}{P}}	\bm{\Delta}\bH^{(j)}$. We start by recalling the well known resolvent equality.
\begin{proposition}[Resolvent equality]
Let $\mathbf{A}\in \mathbb{C}^{n\times n}$ and $\mathbf{B}\in \mathbb{C}^{n\times n}$ be two invertible matrices, it then holds that
\begin{equation}
\mathbf{A}^{-1}-\mathbf{B}^{-1}=\mathbf{B}^{-1}(\mathbf{B}-\mathbf{A})\mathbf{A}^{-1}.
\end{equation}
\end{proposition}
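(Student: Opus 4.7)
The plan is to verify the identity by direct algebraic manipulation of the right-hand side, using only associativity and distributivity of matrix multiplication together with the defining property $\mathbf{M}\mathbf{M}^{-1}=\mathbf{M}^{-1}\mathbf{M}=\mathbf{I}$ of the inverse. No structural assumption on $\mathbf{A}$ or $\mathbf{B}$ beyond invertibility is needed, so the argument is one line of algebra rather than an estimate, which is appropriate given that this proposition is used as a black-box lemma later in Appendix~B.

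First, I would distribute $\mathbf{B}^{-1}$ on the left and $\mathbf{A}^{-1}$ on the right across the difference $\mathbf{B}-\mathbf{A}$, writing
\begin{equation*}
\mathbf{B}^{-1}(\mathbf{B}-\mathbf{A})\mathbf{A}^{-1}=\mathbf{B}^{-1}\mathbf{B}\mathbf{A}^{-1}-\mathbf{B}^{-1}\mathbf{A}\mathbf{A}^{-1}.
\end{equation*}
Next, I would invoke associativity so that the factors can be regrouped as $(\mathbf{B}^{-1}\mathbf{B})\mathbf{A}^{-1}$ and $\mathbf{B}^{-1}(\mathbf{A}\mathbf{A}^{-1})$, and then apply the inverse relations $\mathbf{B}^{-1}\mathbf{B}=\mathbf{I}$ and $\mathbf{A}\mathbf{A}^{-1}=\mathbf{I}$ to collapse each factor, producing $\mathbf{A}^{-1}-\mathbf{B}^{-1}$.

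There is genuinely no obstacle in this proof; the only point worth mentioning explicitly is that invertibility of both matrices is used twice, once on each side, so the identity requires nonsingularity of \emph{both} $\mathbf{A}$ and $\mathbf{B}$ (not merely one of them), and that no commutativity is assumed, which is why the order of factors in $\mathbf{B}^{-1}(\mathbf{B}-\mathbf{A})\mathbf{A}^{-1}$ must be preserved throughout. A symmetric one-line check, starting instead from $\mathbf{A}^{-1}-\mathbf{B}^{-1}=\mathbf{B}^{-1}(\mathbf{B}\mathbf{A}^{-1}-\mathbf{B}\mathbf{B}^{-1}\mathbf{B}\mathbf{A}^{-1})$, could be given as an alternative, but the direct expansion above is the cleanest and is what I would present.
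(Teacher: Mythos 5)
Your proof is correct: the direct expansion $\mathbf{B}^{-1}(\mathbf{B}-\mathbf{A})\mathbf{A}^{-1}=\mathbf{B}^{-1}\mathbf{B}\mathbf{A}^{-1}-\mathbf{B}^{-1}\mathbf{A}\mathbf{A}^{-1}=\mathbf{A}^{-1}-\mathbf{B}^{-1}$ is exactly the standard one-line verification, and your remarks about needing invertibility of both matrices and preserving the order of factors are accurate. The paper itself states this proposition without proof, simply recalling it as the well-known resolvent identity, so your argument supplies precisely the omitted (and entirely routine) justification.
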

Using the resolvent equality two times successively, we can write
\begin{align}
&\left(\mathbf{H}^{(j)}\right)^{-1}\!\!\!\!\!\!-\mathbf{H}^{-1}\notag\\
&=\left(\bH+\sqrt{P^{-1}}	\bm{\Delta}\bH^{(j)}\right)^{-1}-\mathbf{H}^{-1}\\
&=\mathbf{H}^{-1} (-\sqrt{P^{-1}}	\bm{\Delta}\bH^{(j)})(\mathbf{H}^{(j)})^{-1}\\
&=\mathbf{H}^{-1} (-\sqrt{P^{-1}} \bm{\Delta}\bH^{(j)})\bH^{-1}\notag\\
&\qquad+\mathbf{H}^{-1} (-\sqrt{P^{-1}} \bm{\Delta}\bH^{(j)})\LB (\mathbf{H}^{(j)})^{-1}-\bH^{-1}\RB\\
&=-\sqrt{P^{-1}}\mathbf{H}^{-1}\bm{\Delta}\bH^{(j)} \mathbf{H}^{-1}\notag\\
&\qquad+P^{-1}\mathbf{H}^{-1}\bm{\Delta}\bH^{(j)}\mathbf{H}^{-1}\bm{\Delta}\bH^{(j)}(\mathbf{H}^{(j)})^{-1}.
\label{eq:prof_Conv_2}
\end{align}
We can then use the properties of the norm (matrix norm inequality and triangular inequality) to obtain the upperbound
\begin{align}
&\|\left(\mathbf{H}^{(j)}\right)^{-1}\bm{e}_i-\mathbf{H}^{-1}\bm{e}_i\|\leq \sqrt{P^{-1}}\|\mathbf{H}^{-1}\|^2_{\Fro}\|\bm{\Delta}\bH^{(j)}\|_{\Fro}\notag\\
& \qquad\qquad\qquad+P^{-1}\|(\mathbf{H}^{(j)})^{-1}\|_{\Fro}\|\mathbf{H}^{-1}\|_{\Fro}^2 \|\bm{\Delta}\bH^{(j)}\|_{\Fro}^2
%&\doteq \sqrt{P^{-1}}\|\mathbf{H}^{-1}\|^2_{\Fro}\|\bm{\Delta}\bH^{(j)}\|_{\Fro}.
\label{eq:prof_Conv_3}
\end{align} 
It follows then directly from the norm properties that
\begin{align}
\left|\|\left(\mathbf{H}^{(j)}\right)^{-1}\!\!\!\!\!\!\bm{e}_i\|-\|\mathbf{H}^{-1}\bm{e}_i\|\right|&\leq \|\left(\mathbf{H}^{(j)}\right)^{-1}\!\!\!\!\!\!\bm{e}_i-\mathbf{H}^{-1}\bm{e}_i\|
\label{eq:prof_Conv_4}
\end{align} 
We can then use this result to write
\begin{align}
&\left\|\frac{\left(\mathbf{H}^{(j)}\right)^{-1}\!\!\!\!\!\!\bm{e}_i}{\|\left(\mathbf{H}^{(j)}\right)^{-1}\!\!\!\!\!\!\bm{e}_i\|}-\frac{\mathbf{H}^{-1}\bm{e}_i}{\|\mathbf{H}^{-1}\bm{e}_i\|}\right\|\\
&=
\|\left(\mathbf{H}^{(j)}\right)^{-1}\!\!\!\!\!\!\bm{e}_i\|\left\|\left(\mathbf{H}^{(j)}\right)^{-1}\!\!\!\!\!\!\bm{e}_i-\frac{ \|\left(\mathbf{H}^{(j)}\right)^{-1}\!\!\!\!\!\!\bm{e}_i\|}{\|\mathbf{H}^{-1}\bm{e}_i\|}\mathbf{H}^{-1}\bm{e}_i\right\|\\
%&=\|\left(\mathbf{H}^{(j)}\right)^{-1}\!\!\!\!\!\!\bm{e}_i\|\left\|\left(\mathbf{H}^{(j)}\right)^{-1}\!\!\!\!\!\!\bm{e}_i-\mathbf{H}^{-1}\bm{e}_i+\frac{ \|\mathbf{H}^{-1}\bm{e}_i\|-\|\left(\mathbf{H}^{(j)}\right)^{-1}\!\!\!\!\!\!\bm{e}_i\|}{\|\mathbf{H}^{-1}\bm{e}_i\|}\mathbf{H}^{-1}\bm{e}_i\right\|\\
&\leq\|\left(\mathbf{H}^{(j)}\right)^{-1}\!\!\!\!\!\!\bm{e}_i\|\left\|\left(\mathbf{H}^{(j)}\right)^{-1}\!\!\!\!\!\!\bm{e}_i-\mathbf{H}^{-1}\bm{e}_i\right\|\notag\\
&\qquad\qquad\qquad\qquad+\left| \|\mathbf{H}^{-1}\bm{e}_i\|-\|\left(\mathbf{H}^{(j)}\right)^{-1}\!\!\!\!\!\!\bm{e}_i\|\right|\\
&\leq\LB\|\left(\mathbf{H}^{(j)}\right)^{-1}\!\!\!\!\!\!\bm{e}_i\|+1\RB\left\|\left(\mathbf{H}^{(j)}\right)^{-1}\!\!\!\!\!\!\bm{e}_i-\mathbf{H}^{-1}\bm{e}_i\right\|.
\label{eq:prof_Conv_5}
\end{align} 
Using \eqref{eq:prof_Conv_3}, inside \eqref{eq:prof_Conv_5} yields 
\begin{equation}
\begin{aligned}
&\left\|\frac{\left(\mathbf{H}^{(j)}\right)^{-1}\bm{e}_i}{\|\left(\mathbf{H}^{(j)}\right)^{-1}\bm{e}_i\|}-\frac{\mathbf{H}^{-1}\bm{e}_i}{\|\mathbf{H}^{-1}\bm{e}_i\|}\right\|\\
&\leq \sqrt{P^{-1}}\|\mathbf{H}^{-1}\|^2_{\Fro}(\|\mathbf{H}^{-1}\|^2_{\Fro}+1)\|\bm{\Delta}\bH^{(j)}\|_{\Fro}\\
&+P^{-1}\|(\mathbf{H}^{(j)})^{-1}\|_{\Fro}\|\mathbf{H}^{-1}\|_{\Fro}^2 (\|\mathbf{H}^{-1}\|^2_{\Fro}\!+\!1)\|\bm{\Delta}\bH^{(j)}\|_{\Fro}^2.
%&\doteq \sqrt{P^{-1}}\|\mathbf{H}^{-1}\|^2_{\Fro}\|\bm{\Delta}\bH^{(j)}\|_{\Fro}.
\label{eq:prof_Conv_6}
\end{aligned}
\end{equation}   
Because of our preliminary remark in Appendix~\ref{app:preliminary}, we consider that all the expectations exist and are finite. Taking the square and the expectation, we obtain then
\begin{equation}
\E\LSB \left\|\sqrt{P}\frac{\left(\mathbf{H}^{(j)}\right)^{-1}\bm{e}_i}{\|\left(\mathbf{H}^{(j)}\right)^{-1}\bm{e}_i\|}-\sqrt{P}\frac{\mathbf{H}^{-1}\bm{e}_i}{\|\mathbf{H}^{-1}\bm{e}_i\|}\right\|\RSB\dotleq P^{0}.
\label{eq:prof_Conv_7}
\end{equation} 
Applying Proposition~\ref{prop_sufficient} concludes the proof.
\end{proof}
 
%%%%%%%%%%%%%%%%%%%%%%%%%%%%%%%%%%%%%%%%%%%%%%%%%%%%%%%%%%%%%%%%%%
%%%%%%%%%%%%%%%%%%%%%%%%%%%%%%%%%%%%%%%%%%%%%%%%%%%%%%%%%%%%%%%%%% 
\section{Proof of Theorem~\ref{thm}}\label{app:proof_thm}
%%%%%%%%%%%%%%%%%%%%%%%%%%%%%%%%%%%%%%%%%%%%%%%%%%%%%%%%%%%%%%%%%%
%%%%%%%%%%%%%%%%%%%%%%%%%%%%%%%%%%%%%%%%%%%%%%%%%%%%%%%%%%%%%%%%%% 

\begin{proof}
Let us focus now without loss of generality on the CSIT allocation at TX~$j$. Following the sufficient condition in Proposition~\ref{prop_sufficient}, the maximal DoF is achieved at every RX if
\begin{align}
\E\LSB \left |\frac{\bm{e}_j^{\He}\mathbf{H}^{-1}\bm{e}_i}{\|\mathbf{H}^{-1}\bm{e}_i\|}-\frac{\bm{e}_j^{\He}\LB\mathbf{H}^{(j)}\RB^{-1}\bm{e}_i}{\|\LB\mathbf{H}^{(j)}\RB^{-1}\bm{e}_i\|}\right|^2\RSB&\dotleq P^{-1},\qquad\forall i.
\label{eq:thm_proof_1} 
\end{align} 
Hence, we will in the following show that the CSIT allocation given in Theorem~\ref{thm} ensures that \eqref{eq:thm_proof_1} is satisfied. 

Following a similar calculation as in \eqref{eq:prof_Conv_5} in Appendix~\ref{app:proof_conv}, we obtain the relation given in \eqref{eq:thm_proof_2} at the top of next page.
\begin{figure*}
\begin{align}
 \left |\frac{\bm{e}_j^{\He}\mathbf{H}^{-1}\bm{e}_i}{\|\mathbf{H}^{-1}\bm{e}_i\|}-\frac{\bm{e}_j^{\He}\LB\mathbf{H}^{(j)}\RB^{-1}\bm{e}_i}{\|\LB\mathbf{H}^{(j)}\RB^{-1}\bm{e}_i\|}\right|&\leq 
\| \mathbf{H}^{-1} \bm{e}_i\| \left|\bm{e}_j^{\He}\mathbf{H}^{-1}\bm{e}_i-\bm{e}_j^{\He}(\mathbf{H}^{(j)})^{-1}\bm{e}_i\right|+|\bm{e}_j^{\He}(\mathbf{H}^{(j)})^{-1}\bm{e}_i| \left|\|\mathbf{H}^{-1}\bm{e}_i\|-\|(\mathbf{H}^{(j)})^{-1}\bm{e}_i\|\right|
\label{eq:thm_proof_2} 
\end{align} 
\end{figure*}
We will bound the first term of \eqref{eq:thm_proof_2} in Subsection~\ref{app:proof_thm:1} and the second term in Subsection~\ref{app:proof_thm:2}. 

Our calculations rely on the following series expansion for the channel inverse. It can be seen that the outer-diagonal of $\bH$ is at least equal to~$P^{-\min_{i\neq j}\Gamma_{i,j}}$. Upon defining the matrix~$\mathbf{D}\triangleq \diag(\mathbf{H})$, this means that we have for $P$ large enough 
\begin{equation}
\|\I_K-\bD^{-1}\bH\|_{\Fro}<1.
\label{eq:thm_proof_3} 
\end{equation}
It follows that we can define the geometric sum of the matrices as
\begin{equation}
\bS\triangleq \sum_{n=0}^{\infty}(\I_K-\bD^{-1}\bH)^n.
\label{eq:thm_proof_4} 
\end{equation}
Studying the partial sums of $\bS$, the following well known result can be shown
\begin{equation}
(\bD^{-1}\bH)^{-1}=\bS.
\label{eq:thm_proof_5} 
\end{equation}
This means that the elements of the channel inverse can be written as
\begin{align}
\bm{e}_j^{\He}\mathbf{H}^{-1}\bm{e}_i&=\bm{e}_j^{\He}\sum_{n=0}^\infty (\mathbf{D}^{-1}(\mathbf{D}-\mathbf{H}))^n\mathbf{D}^{-1}\bm{e}_i,\qquad \forall i\\
&=\sum_{n=0}^\infty  C^{j,i}_n
\label{eq:thm_proof_6} 
\end{align}
where we have defined
\begin{equation}
C^{j,i}_n \triangleq \bm{e}_j^{\He}(\mathbf{D}^{-1}(\mathbf{D}-\mathbf{H}))^n\mathbf{D}^{-1}\bm{e}_i,\qquad \forall i, n.
\label{eq:thm_proof_7} 
\end{equation}
Writing explicitly the matrix products in \eqref{eq:thm_proof_7}, it follows from basic algebra that
\begin{align}
\E[|C^{j,i}_n|^2]&\dotleq \bigg(P^ {-\min_{i\neq j}\Gamma_{i,j}} \bigg)^n.
\label{eq:thm_proof_8} 
\end{align} 
Hence, the $C_{ji}^n$ with $n$ such that 
\begin{equation}
n \min_{i\neq j}\Gamma_{i,j} >1
\label{eq:thm_proof_9} 
\end{equation}
can be neglected without any impact over the DoF. This translates to truncating the infinite summation to a finite summation up to~$n_{0}$ defined as
\begin{equation}
n_0\triangleq \bigg\lceil \frac{1}{\min_{i\neq j}\Gamma_{i,j}}\bigg\rceil.
\label{eq:thm_proof_10} 
\end{equation}

%%%%%%%%%%%%%%%%%%%%%%%%%%%%%%%%%%%%%%%%%%%%%%%%%%%%%%%%%%%%%%%%%% 
\subsection{Analysis of $\left|\bm{e}_j^{\He}\mathbf{H}^{-1}\bm{e}_i-\bm{e}_j^{\He}(\mathbf{H}^{(j)})^{-1}\bm{e}_i\right|$}\label{app:proof_thm:1}
%%%%%%%%%%%%%%%%%%%%%%%%%%%%%%%%%%%%%%%%%%%%%%%%%%%%%%%%%%%%%%%%%% 

We focus first on the first term, and since the norm~$\| \mathbf{H}^{-1} \bm{e}_i\|$ is equivalent to $H_{ii}^{-1}$ as the SNR~$P$ increases, we omit it in the following calculation for the sake of clarity.

We further introduce $\forall i,n$,
\begin{align}
\mathbf{D}^{(j)}&\triangleq \diag(\mathbf{H}^{(j)}),\\
C^{j,i,(j)}_n&\triangleq\bm{e}_j^{\He}((\mathbf{D}^{(j)})^{-1}(\mathbf{D}^{(j)}-\mathbf{H}^{(j)}))^n(\mathbf{D}^{(j)})^{-1}\bm{e}_i. 
\label{eq:thm_proof_11} 
\end{align}
We can then write
\begin{align}
&\E[|\bm{e}_j^{\He}\mathbf{H}^{-1}\bm{e}_i-\bm{e}_j^{\He}\LB\mathbf{H}^{(j)}\RB^{-1}\!\!\!\!\!\!\bm{e}_i|^2]\notag
\\&\doteq 
\E\bigg[\bigg|\sum_{n=1}^{n_{0}} C^{j,i}_n-C^{j,i,(j)}_n\bigg|^2\bigg]\\
&\leq \E\bigg[\LB\sum_{n=1}^{n_{0}} |C^{j,i}_n-C^{j,i,(j)}_n|\RB^2\bigg]\\
&\dotleq \sum_{n=1}^{n_{0}} \E[|C^{j,i}_n-C^{j,i,(j)}_n|^2]
\label{eq:thm_proof_12} 
\end{align}
where we have used iteratively that $(a+b)^2\leq 2(a^2+b^2), \forall a,b \in \mathbb{R}^2$ to obtain the last inequality (and the multiplicative constants could be removed because of the exponential inequality). We now look for a CSIT allocation~$\bB^{(j)}$ ensuring that  
\begin{align}
 \E[|C^{j,i}_n-C^{j,i,(j)}_n|^2]\dotleq P^{-1},\qquad \forall i,n.
\label{eq:thm_proof_13} 
\end{align}
As a starting point, we consider the first coefficients.
\begin{align}
\E[|C^{j,i}_0-C^{j,i,{(j)}}_0|^2]&=\E\bigg[\bigg|\frac{\bm{e}_j^{\He}\bm{e}_i}{H_{i,i}}-\frac{\bm{e}_j^{\He}\bm{e}_i}{H_{i,i}^{(j)}}\bigg|^2\bigg]\\
%&=\E\bigg[\bigg|\frac{\sigma_{ii}^{(j)}\Delta{H}^{(j)}_{ii}}{H_{ii}H_{ii}^{(j)}}\bigg|^2\bigg]\delta_{ji}\\
&\doteq 2^{-B_{i,i}^{(j)}} \delta_{j,i}.
\label{eq:thm_proof_14} 
\end{align}
Thus, we have
\begin{align}
B_{j,j}^{(j)}\geq\lceil \log_2(P)\rceil.
\label{eq:thm_proof_15} 
\end{align}
This ensures to fulfill \eqref{eq:thm_proof_13}  for $n=0$. The error done over $H_{j,j}$ becomes then negligible in terms of DoF (i.e., in terms of exponential equality). For $n\geq 1$, it holds that~$C^{j,j}_n=0$ such that we assume that~$i\neq j$ in the following,
\begin{align}
\E[|C^{j,i}_1-C^{j,i,{(j)}}_1|^2]&\!=\!\E\bigg[\bigg|\frac{ \{\mathbf{D}-\mathbf{H}\}_{j,i}}{H_{j,j}H_{i,i}}-\frac{ \{\mathbf{D}^{(j)}-\mathbf{H}^{(j)}\}_{j,i}}{H^{(j)}_{j,j}H^{(j)}_{i,i}}\bigg|^2\bigg]\\
&\doteq \E\bigg[\bigg|\frac{ H_{i,i}^{(j)}\tilde{H}_{j,i}-H_{i,i}\tilde{H}_{j,i}^{(j)}}{H_{j,j}H_{i,i}H^{(j)}_{i,i}}\bigg|^2\bigg] P^{-\Gamma_{j,i}}\\
%&\doteq  ((\sigma_{ii}^{(j)})^2+(\sigma_{ji}^{(j)})^2)P^{-\{\bm{\Gamma}\}_{i,j}}\\
&\doteq  (2^{-B_{i,i}^{(j)}}+2^{-B_{j,i}^{(j)}})P^{-\Gamma_{j,i}}
\label{eq:thm_proof_16} 
\end{align}
Setting 
\begin{align}
B_{i,i}^{(j)}&\geq\lceil [1-\Gamma_{j,i}]^{+}\log_2(P)\rceil ,\qquad \forall i\neq j\label{eq:General_proof_11_1} \\
B_{j,i}^{(j)}&\geq\lceil[1-\Gamma_{j,i}]^{+}\log_2(P)\rceil ,\qquad \forall i\neq j
\label{eq:thm_proof_17} 
\end{align}
ensures to fulfill \eqref{eq:thm_proof_13} for all streams~$i$ for $n=1$. Going further, we consider then~$C^{ji}_2$,
\begin{align}
&\E[|C^{j,i}_2-C^{j,i,{(j)}}_2|^2]\\
&=\E\bigg[ \bigg|\bm{e}_j^{\He}(\mathbf{D}^{-1}(\mathbf{D}-\mathbf{H}))^2\mathbf{D}^{-1}\bm{e}_i\notag\\
&\qquad- \bm{e}_j^{\He}((\mathbf{D}^{(j)})^{-1}(\mathbf{D}^{(j)}-\mathbf{H}^{(j)}))^2(\mathbf{D}^{(j)})^{-1}\bm{e}_i\bigg|^2\bigg]\\
%&=\E\bigg[ \bigg|\sum_{k=1}^K\bm{e}_j^{\He}\mathbf{D}^{-2}(\mathbf{D}-\mathbf{H})\bm{e}_k\bm{e}_k^{\He}(\mathbf{D}-\mathbf{H})\mathbf{D}^{-1}\bm{e}_i\notag\\
%&-\bm{e}_j^{\He}(\mathbf{D}^{(j)})^{-2}(\mathbf{D}^{(j)}\!-\!\mathbf{H}^{(j)})\bm{e}_k\bm{e}_k^{\He}(\mathbf{D}^{(j)}\!-\!\mathbf{H}^{(j)})(\mathbf{D}^{(j)})^{-1}\bm{e}_i\bigg|^2\!\bigg]\\
&=\E\bigg[ \bigg|\sum_{k=1, k\neq i, k\neq j}^K \frac{1}{H_{j,j}^{2}H_{i,i}}\tilde{H}_{j,k}\tilde{H}_{k,i}\notag\\
&~~~~~~~~~~~~~~-\frac{1}{(H_{j,j}^{(j)})^{2}H_{i,i}^{(j)}}\tilde{H}_{j,k}^{(j)}\tilde{H}_{k,i}^{(j)}\bigg|^2\bigg]P^{-(\Gamma_{j,k}+\Gamma_{k,i})}\\
%&\dotleq\E[ |\sum_{k=1, k\neq i, k\neq j}^K  \tilde{H}_{jk}\tilde{H}_{ki}-\tilde{H}_{jk}^{(j)}\tilde{H}_{ki}^{(j)}|^2](\mu^2)^{\dist(j,k)+\dist(k,i)}\\
&\dotleq\E\bigg[ \sum_{k=1, k\neq i, k\neq j}^K  |\tilde{H}_{j,k}\tilde{H}_{k,i}-\tilde{H}_{j,k}^{(j)}\tilde{H}_{k,i}^{(j)}\bigg|^2\bigg]P^{-(\Gamma_{j,k}+\Gamma_{k,i})}\\
&\doteq \sum_{k=1, k\neq i, k\neq j}^K  (2^{-B_{j,k}^{(j)}}-2^{B_{k,i}^{(j)}}) P^{-(\Gamma_{j,k}+\Gamma_{k,i})}\\
&\doteq \sum_{k=1, k\neq i, k\neq j}^K 2^{-B_{k,i}^{(j)}} P^{-(\Gamma_{j,k}+\Gamma_{k,i})}
\label{eq:thm_proof_18} 
\end{align}
where we could remove~$2^{-B_{j,k}^{(j)}}$ in the last exponential inequality because of \eqref{eq:thm_proof_17}. Setting
\begin{align}
B_{k,i}^{(j)}&\geq \lceil [1-(\Gamma_{j,k}+\Gamma_{k,i})]^{+}\log_2(P)\rceil , \forall k,i
\label{eq:thm_proof_19} 
\end{align}
allows to fulfill \eqref{eq:thm_proof_13} for $n=2$. Going to arbitrary value of~$n$, we obtain equation \eqref{eq:thm_proof_20} at the top of next page.
\begin{figure*}
\begin{align}
&\E[|C^{j,i}_n-C^{j,i,{(j)}}_n|^2]\nonumber\\
&=\E\bigg[\bigg|\sum_{k_1\neq j}^K \!\sum_{k_2\neq k_1}^K \!\ldots\!\! \sum_{\substack{k_{n-1}\neq k_{n-2}\\k_{n-1}\neq i}}^K \!\!\bigg(\frac{\tilde{H}_{j,k_1}\tilde{H}_{k_1,k_2}\ldots,\tilde{H}_{k_{n-1,i}}}{H_{j,j}^{n} H_{i,i}}-\frac{\tilde{H}^{(j)}_{j,k_1}\tilde{H}^{(j)}_{k_1,k_2}\ldots,\tilde{H}^{(j)}_{k_{n-1,i}}}{(H^{(j)}_{j,j})^{n} H_{i,i}^{(j)}} \bigg)\bigg|^2\bigg]
P^{-(\Gamma_{j,k_1}+\Gamma_{k_1,k_2}+\ldots+\Gamma_{k_{n-1},i})}.
\label{eq:thm_proof_20} 
\end{align}
\end{figure*}  
It is clear from \eqref{eq:thm_proof_20}  that when considering the $i$th stream, the coefficient $\tilde{H}_{k,\ell}$ appears weighted with $P^{-1}$ at a coefficient at least equal to $\Gamma_{k\rightarrow j}+\Gamma_{k,\ell}+\Gamma_{i\rightarrow \ell}$. Yet, this condition has to be fulfilled for every stream, hence every value of $i$. Since~$\Gamma_{i\rightarrow \ell}\geq 0$, taking $i=\ell$ gives the most tight constraint. In total, this gives the CSIT allocation
\begin{align}
B_{k,\ell}^{(j)}&\geq \lceil [1-(\Gamma_{k\rightarrow j}+\Gamma_{k,\ell})]^{+}\log_2(P)\rceil ,\qquad  \forall k,\ell.
\label{eq:thm_proof_21} 
\end{align}
This CSIT allocation can be seen to fulfill \eqref{eq:thm_proof_13} for all $i$ and all $n$.
%%%%%%%%%%%%%%%%%%%%%%%%%%%%%%%%%%%%%%%%%%%%%%%%%%%%%%%%%%%%%%%%%% 
\subsection{Analysis of $|\bm{e}_j^{\He}(\mathbf{H}^{(j)})^{-1}\bm{e}_i| \left|\|\mathbf{H}^{-1}\bm{e}_i\|-\|(\mathbf{H}^{(j)})^{-1}\bm{e}_i\|\right|$}\label{app:proof_thm:2}
%%%%%%%%%%%%%%%%%%%%%%%%%%%%%%%%%%%%%%%%%%%%%%%%%%%%%%%%%%%%%%%%%% 
From the norm properties, it holds that
\begin{align}
&\left|\|\mathbf{H}^{-1}\bm{e}_i\|-\|(\mathbf{H}^{(j)})^{-1}\bm{e}_i\|\right|^2\notag\\
&\leq \left \|\mathbf{H}^{-1}\bm{e}_i - (\mathbf{H}^{(j)})^{-1}\bm{e}_i\right\|^2\\
&= \sum_{k=1}^K \left |\bm{e}_k^{\He} \mathbf{H}^{-1}\bm{e}_i -\bm{e}_k^{\He} (\mathbf{H}^{(j)})^{-1}\bm{e}_i\right|^2
\label{eq:thm_proof_22} 
\end{align}
Hence, we can study the CSIT requirements ensuring that
\begin{align}
 \E[|C^{k,i}_n-C^{k,i,(j)}_n|^2]\dotleq P^{-1},\qquad \forall k,n.
\label{eq:thm_proof_23} 
\end{align}
It can be seen that the difference between \eqref{eq:thm_proof_23} and \eqref{eq:thm_proof_13} comes from the fact that \eqref{eq:thm_proof_13} had to be fulfilled for every stream~$i$, whereas \eqref{eq:thm_proof_23} has to be fulfilled for every $k$, with the stream index fixed. Proceeding similarly as in the first part, we obtain the following condition
\begin{align}
B_{k,\ell}^{(j)}\geq 1-\Gamma_{k,\ell}-\Gamma_{i\rightarrow \ell},\qquad \forall k,\ell.
\label{eq:thm_proof_24} 
\end{align}
However, we have also shown in the previous calculations that
\begin{align}
|\bm{e}_j^{\He}(\mathbf{H}^{(j)})^{-1}\bm{e}_i|^2\dotleq P^{-\Gamma_{j,i}}.
\label{eq:thm_proof_25} 
\end{align}
Taking this further attenuation into account and the fact that the CSIT requirements have to be fulfilled for every stream~$i$, we obtain the CSIT allocation
\begin{align}
B_{k,\ell}^{(j)}\geq 1-\Gamma_{k,\ell}-\min_{i}\left (\Gamma_{i\rightarrow \ell}+\Gamma_{j,i}\right).
\label{eq:thm_proof_26} 
\end{align}
Putting together \eqref{eq:thm_proof_21} and \eqref{eq:thm_proof_26} gives the CSIT allocation in Theorem~\ref{thm}.
\end{proof}
%\bibliographystyle{IEEEtran}
%\bibliography{./../Literatur}
% Generated by IEEEtran.bst, version: 1.13 (2008/09/30)

  \begin{IEEEbiography}[{\includegraphics[scale=0.5]{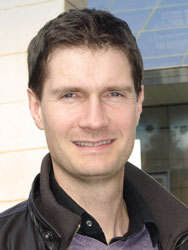}}]{David Gesbert}
	 (IEEE Fellow) is Professor and Head of the Mobile Communications Department, EURECOM, France, where he also heads the Communications Theory Group. He obtained the Ph.D degree from Ecole Nationale Superieure des Telecommunications, France, in 1997. From 1997 to 1999 he has been with the Information Systems Laboratory, Stanford University. In 1999, he was a founding engineer of Iospan Wireless Inc, San Jose, Ca.,a startup company pioneering MIMO-OFDM (now Intel). Between 2001 and 2003 he has been with the Department of Informatics, University of Oslo as an adjunct professor. D. Gesbert has published over 200 papers and several patents all in the area of signal processing, communications, and wireless networks.

 D. Gesbert was a co-editor of several special issues on wireless networks and communications theory, for JSAC (2003, 2007, 2009), EURASIP Journal on Applied Signal Processing (2004, 2007), Wireless Communications Magazine (2006). He served on the IEEE Signal Processing for Communications Technical Committee, 2003-2008.  He was an associate editor for IEEE Transactions on Wireless Communications and the EURASIP Journal on Wireless Communications and Networking. He authored or co-authored papers winning the 2012 SPS Signal Processing Magazine Best Paper Award, 2004 IEEE Best Tutorial Paper Award (Communications Society), 2005 Young Author Best Paper Award for Signal Proc. Society journals, and paper awards at conferences 2011 IEEE SPAWC, 2004 ACM MSWiM workshop. He co-authored the book Space time wireless communications: From parameter estimation to MIMO systems, Cambridge Press, 2006.  In 2013, he was a General Chair for the IEEE Communications Theory Workshop, and a Technical Program Chair for the Communications Theory Symposium of ICC2013. He is a Technical Program Chair for IEEE ICC 2017, to be held in Paris. 
 \end{IEEEbiography}
 \smallskip{} 
  \begin{IEEEbiography}[{\includegraphics[scale=0.4]{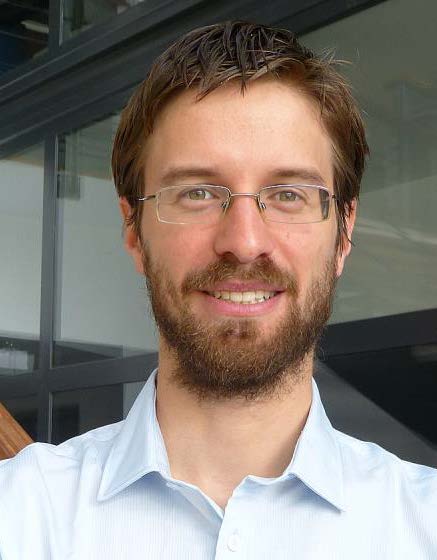}}]{Paul de Kerret} (IEEE Student Member) graduated in 2009 from Ecole Nationale Superieure des Telecommunications de Bretagne, France and obtained a diploma degree in electrical engineering from Munich University of Technology (TUM), Germany. He also earned a four year degree in mathematics at the Universite de Bretagne Occidentale, France in 2008. From January 2010 to september 2010, he has been a research assistant at the Institute for Theoretical Information Technology, RWTH Aachen University, Germany. In december 2013, he obtained a Ph.D. degree in the Mobile Communications Department at EURECOM, France, under the supervision of David Gesbert. He is the first author of several journal papers in prestigious journals and a magazine. With David Gesbert, he will give a tutorial in ICASSP 2014 on the challenges behind the cooperation of transmitters in wireless networks. He is interested in communication theory, information theory, and game theory.
 \end{IEEEbiography}
\end{document}